\pgfplotsset{compat=1.18}
\newcommand{\equivclass}{\lambda}
\newcommand{\equivset}{\Lambda}
\newcommand{\pay}{payof\!f}
\begin{document}
\title{Runtime Verification via Rational Monitor with Imperfect Information}
%
\titlerunning{RV via Rational Monitor with Imperfect Information}
%
\author{Angelo Ferrando\inst{1}\orcidID{0000-0002-8711-4670} \and
Vadim Malvone\inst{2}\orcidID{0000-0001-6138-4229}}
\authorrunning{A. Ferrando and V. Malvone}
%
\institute{Department of Physics, Informatics and Mathematics, University of Modena and Reggio Emilia, Italy \and
LTCI, Telecom Paris, Institut Polytechnique de Paris, France
\email{angelo.ferrando@unimore.it}\\
\email{vadim.malvone@telecom-paris.fr}}
\maketitle              
\newcommand{\ev}{ev}
\newcommand{\univers}{\mathbb{U}}
\newcommand{\natSet}{\mathbb{N}}
\newcommand{\boolSet}{\mathbb{B}}
\newcommand{\presumablyTrue}{?_{\top}}
\newcommand{\presumablyFalse}{?_{\bot}}
\newcommand{\unknown}{?}
\newcommand{\giveup}{\upchi}
\newcommand{\cntxt}{\kappa}
\newcommand{\combine}{\chi}
\newcommand{\refine}{\otimes}
\newcommand{\aggregate}{\circ}
\newcommand{\multimodelSet}{\modelSet^{M}}
\newcommand{\compose}{\diamond}
\newcommand{\setting}{\mathcal{ST}}
\newcommand{\enrich}[1]{\mathsterling({#1})}
\newcommand{\decisionSet}{D}
\newcommand{\emptyTrace}{\epsilon}
\newcommand{\trace}{\sigma}
\newcommand{\continuation}{u}
\newcommand{\inftrace}{u}
\newcommand{\systemTraces}{tr(\eventSet)}
\newcommand{\spec}{\varphi}
\newcommand{\specSet}{\Phi}
\newcommand{\modelSet}{\Psi}
\newcommand{\functionSet}{\mathcal{F}}
\newcommand{\system}{S}
\newcommand{\eventSet}{\Sigma}
\newcommand{\predicate}{P}
\newcommand{\project}{\pi}
\newcommand{\sem}[1]{\llbracket {#1} \rrbracket}
\newcommand{\component}{\mathcal{C}}
\newcommand{\model}[1]{\ifx&#1&{\psi}\else \psi_{#1}\fi}
\newcommand{\function}[1]{\ifx&#1&{F}\else F_{#1}\fi}
\newcommand{\compMonitor}[1]{CoMon_{{#1}}}
\newcommand{\compMonitorAppl}[2]{CoMon_{{#1}}({#2})}
\newcommand{\decompose}{decomp}
\newcommand{\monitor}[3]{Mon_{{#1}, {#2}}^{#3}}
\newcommand{\choreographedMonitor}[3]{ChorMMon_{{#1}, {#2}}^{#3}}
\newcommand{\orchestratedMonitor}[3]{OrchMMon_{{#1}, {#2}}^{#3}}
\newcommand{\cenMonitor}[3]{CenMMon_{{#1}, {#2}}^{#3}}
\newcommand{\monitorAppl}[4]{Mon_{{#1}, {#2}}^{#3}({#4})}
\newcommand{\choreographedMonitorAppl}[4]{ChorMMon_{{#1}, {#2}}^{#3}({#4})}
\newcommand{\orchestratedMonitorAppl}[4]{OrchMMon_{{#1}, {#2}}^{#3}({#4})}
\newcommand{\cenMonitorAppl}[4]{CenMMon_{{#1}, {#2}}^{#3}({#4})}
\newcommand{\stmonitor}[1]{Mon_{{#1}}}
\newcommand{\stmonitorAppl}[2]{Mon_{{#1}}({#2})}
\newcommand{\stmonitorv}[1]{Mon_{{#1}}^v}
\newcommand{\stmonitorApplv}[2]{Mon_{{#1}}^v({#2})}
\newcommand{\modelEvents}[1]{\eventSet_{\model{#1}}}
\newcommand{\modelSetEvents}{\eventSet_\modelSet}
\newtheorem{observation}{\textsc{Observation}}
\newcommand{\modelTraces}[1]{tr(\model{#1})}
\newcommand{\modelSetTraces}{tr(\modelSet)}
\newcommand{\traces}[1]{tr({#1})}
\newcommand{\trans}[1]{\stackrel{{#1}}{\rightarrow}}
\newcommand{\combinedModel}{{\model{c}}}
\newcommand{\lang}[1]{\mathcal{L}({#1})}
\newcommand{\explicit}[1]{\epsilon({#1})}

\newcommand{\always}{\square}

\newcommand{\univmonitorable}{\forall_{PZ}\textrm{-}monitorable}
\newcommand{\existmonitorable}{\exists_{PZ}\textrm{-}monitorable}
\newcommand{\univmonitorableclass}{\forall_{PZ}}
\newcommand{\existmonitorableclass}{\exists_{PZ}}

\newcommand{\monitorable}[1]{{#1}\textrm{-}monitorable}

\newcommand{\signedednote}[2]{
\begin{center} \begin{minipage}{3in}
#2 --    #1
\end{minipage} \end{center}}
\newcommand{\vad}[1]{{\color{blue}\signedednote{Vadim}{#1}}}
\newcommand{\ang}[1]{{\color{blue}\signedednote{Angelo}{#1}}}
\newcommand{\new}[1]{{\color{black}{#1}}}

\newcommand{\until}{\ensuremath{\mathop{\mathrm{\mathbf{U}}}}}
\newcommand{\release}{\ensuremath{\mathop{\mathrm{\mathbf{R}}}}}
\newcommand{\nextOp}[1]{\Circle{#1}}

\begin{abstract}
Trusting software systems, particularly autonomous ones, is challenging. To address this, formal verification techniques can ensure these systems behave as expected. Runtime Verification (RV) is a leading, lightweight method for verifying system behaviour during execution. However, traditional RV assumes perfect information, meaning the monitoring component perceives everything accurately. 
This assumption often fails, especially with autonomous systems operating in real-world environments where sensors might be faulty. Additionally, traditional RV considers the monitor to be passive, lacking the capability to interpret the system's information and thus unable to address incomplete data.
In this work, we extend standard RV of Linear Temporal Logic properties to accommodate scenarios where the monitor has imperfect information and behaves rationally. We outline the necessary engineering steps to update the verification pipeline and demonstrate our implementation in a case study involving robotic systems.
\keywords{Runtime Verification \and Autonomous Systems \and Imperfect Information \and Rational Monitor}
\end{abstract}

\section{Introduction}
\label{sec:introduction}

Developing high-quality software is a demanding task for various reasons, including complexity and the presence of autonomous behaviours~\cite{DBLP:journals/corr/MiguelMR14}. Techniques designed for monolithic systems may not be effective for distributed and autonomous ones, posing both technological and engineering challenges. Over the past decades, we have witnessed significant advances in software engineering research, particularly in software development. However, the need for re-engineering extends beyond development to include software verification. As software evolves, so must the methods used to verify it. Runtime Verification (RV), like other verification techniques, must adapt to these changes.

Runtime Verification (RV)~\cite{DBLP:journals/jlp/LeuckerS09,DBLP:series/lncs/BartocciFFR18} is a formal verification technique used to verify the runtime behaviour of software and hardware systems. Unlike other verification methods, RV is not exhaustive; it focuses solely on the system's actual execution. This means that a violation of expected behaviour is only detected if it occurs in the execution trace. Despite this, RV is a lightweight technique because it does not check all possible system behaviours, allowing it to scale better than static verification methods, which often struggle with the state space explosion problem.

RV emerged after static verification methods like model checking~\cite{clarke1997model}, inheriting much from them, particularly in specifying the formal properties to verify. One of the most widely used formalisms in model checking, and consequently in RV, is Linear Temporal Logic (LTL)~\cite{DBLP:conf/focs/Pnueli77}. While we will detail its syntax and semantics later in the paper, we focus on LTL's implicit assumption of perfect information about the system. Typically, LTL verification assumes that the system under analysis provides all necessary information for verification~\cite{10.1145/2000799.2000800}. This is reflected at the verification level by generating atomic propositions that denote the knowledge about the system, which are used to verify the properties of interest. 
However, this assumption does not always hold, particularly for systems with autonomous, distributed, or faulty components, such as faulty sensors in real-world environments. In such cases, assuming all necessary information is available is overly optimistic. Although other works have addressed LTL RV with imperfect information~\cite{DBLP:conf/rv/StollerBSGHSZ11,DBLP:conf/rv/BartocciGKSSZS12,DBLP:conf/rv/KalajdzicBSSG13,DBLP:journals/corr/BartocciG13,DBLP:conf/rv/LeuckerSS0T19}, this research line is the first to tackle the problem fundamentally without requiring a new verification pipeline. Specifically, in~\cite{DBLP:conf/sefm/FerrandoM22}, we presented an initial attempt to extend the standard monitor synthesis pipeline to explicitly account for imperfect information. Building on this idea, we further refine our approach to incorporate the monitor's ability to be rational, enhancing its capability to handle imperfect information. While our previous work focused on engineering a monitor to provide correct answers despite imperfect information, our current work refines the monitor's information processing to yield conclusive results.

The contribution of this paper is twofold. First, we formally define the notion of imperfect information with respect to the monitor's visibility over the system, and re-engineer the LTL monitor's synthesis pipeline to recognise this visibility information. Second, we introduce the concept of a rational monitor, which can dynamically manage its visibility. To the best of our knowledge, this notion has not been previously defined in the literature. Additionally, we present the details of the prototype implemented to support our claims, providing the community with an LTL monitoring library that natively supports imperfect information and rationality. Finally, we show the use of this prototype in a case study.

The paper's structure is as follows. Section~\ref{sec:preliminaries} reports preliminaries notions.
Section~\ref{sec:case-study} introduces our case study.
Section~\ref{sec:rv-imperfect-info} formally presents the notion of imperfect information, its implication at the monitoring level and the resulting re-engineering of the standard LTL monitor's synthesis pipeline. 
Furthermore, Section~\ref{sec:rv-rational} provides the main tools to introduce monitor rationality.
Section~\ref{sec:implementation} reports the details on the prototype that has been developed as a result of the re-engineering process, along with some experiments of its use in a realistic case study. Section~\ref{sec:related-work} positions the paper against the state of the art. Section~\ref{sec:conclusions-and-future-work} concludes the paper and discusses some possible future directions.


\section{Preliminaries}
\label{sec:preliminaries}


A system $\system$ has an \textit{alphabet} $\eventSet$ with which it is possible to define the set $2^\eventSet$ of all its events. Given an alphabet $\eventSet$, a \emph{trace} $\trace=ev_0 ev_1 \ldots$, is a sequence of events in $2^\eventSet$. With $\trace(i)$ we denote the i-th element of $\trace$ (\textit{i.e.}, $ev_i$), $\trace^i$ the suffix of $\trace$ starting from $i$ (\textit{i.e.}, $ev_i ev_{i+1} \ldots$), $(2^{\eventSet})^*$ the \emph{set of all possible finite traces} over $\eventSet$, and $(2^{\eventSet})^\omega$ the \emph{set of all possible infinite traces} over $\eventSet$.

The standard formalism to specify properties in RV is Linear Temporal Logic (LTL~\cite{DBLP:conf/focs/Pnueli77}). 
The relevant parts of the syntax of LTL are the following:
{\small
$$
\spec := p \;|\; \lnot\spec \;|\; (\spec\lor\spec) \;|\; \nextOp\spec \;|\; (\spec \;\until\; \spec)  
$$
}
where $p\in\eventSet$ is an atomic proposition (aka atom), $\spec$ is a formula, $\Circle$ stands for \emph{next-time}, and $\until$ stands for \emph{until}. In the rest of the paper, we also use the standard derived operators, such as $(\spec\rightarrow\spec')$ instead of $(\lnot\spec\lor\spec')$, $\spec \;R\; \spec'$ instead of $\lnot(\lnot\spec \until \lnot\spec')$, $\square \spec$ (\emph{always} $\spec$) instead of ($false \;R\; \spec$), and $\lozenge \spec$ (\emph{eventually} $\spec$) instead of ($true \until \spec$).

Let $\trace\in (2^\eventSet)^\omega$ be an infinite sequence of events over $\eventSet$, the semantics of LTL is as follows:
\begin{eqnarray*}
\trace &\models& p \textrm{ if } p \in \trace(0)\\
\trace &\models& \lnot\spec \textrm{ if } \trace \not\models \spec\\
\trace &\models& \spec\lor\spec' \textrm{ if } \trace \models \spec \textrm{ or } \trace \models \spec'\\
\trace &\models& \Circle\spec \textrm{ if } \trace^1\models\spec\\
\trace &\models& \spec  \until \spec' \textrm{ if } \exists_{i \geq 0}.\trace^i\models\spec' \textrm{ and } \forall_{0 \leq j < i}.\trace^j\models\spec
\end{eqnarray*}
A trace $\trace$ satisfies an atomic proposition $p$, if $p$ belongs to $\trace(0)$; which means, $p$ holds in the initial event of the trace $\trace$. A trace $\trace$ satisfies the negation of the LTL property $\spec$, if $\trace$ does not satisfy $\spec$. 
A trace $\trace$ satisfies the disjunction of two LTL properties, if $\trace$ satisfies at least one of them. 
A trace $\trace$ satisfies next-time $\spec$, if the suffix of $\trace$ starting in the next step ($\trace^1$) satisfies $\spec$. Finally, a trace $\trace$ satisfies $\spec\until\spec'$, if there exists a suffix of $\trace$ s.t. $\spec'$ is satisfied, and for all suffixes before it, $\spec$ holds. 
Thus, given an LTL property $\spec$, we denote $\sem{\spec}$ the language of the property, \textit{i.e.}, the set of traces which satisfy $\spec$; namely $\sem{\spec} = \{ \trace \;|\; \trace \models \spec\}$.

In Definition~\ref{rv-def}, we present a general and formalism-agnostic definition of a monitor. Informally, a monitor is a function that, given a trace of events in input, returns a verdict which denotes the satisfaction (resp., violation) of a formal property over the trace.

\begin{definition}[Monitor]\label{rv-def}
Let~$\system$ be a system with alphabet~$\eventSet$, $\trace$ a finite trace,
and~$\spec$ be an LTL property. Then, a monitor for $\spec$ is a function
$\stmonitor{\spec}:(2^\eventSet)^*\rightarrow\mathbb{B}_3$, where
$\mathbb{B}_3=\{\top, \bot, \unknown \}$:
\vspace*{-0.25cm}
$$
\stmonitorAppl{\spec}{\trace} =
\left\{
\bgroup
\def\arraystretch{1.2}
  \begin{tabular}{cl}
  $\top$ & {\qquad$\forall_{\continuation \in (2^\eventSet)^\omega}.\trace \bullet \continuation \in \sem{\spec}$}\\
  $\bot$ & {\qquad$\forall_{\continuation \in (2^\eventSet)^\omega}.\trace \bullet \continuation \notin\sem{\spec}$}\\
  $\unknown$ & {\qquad$otherwise$}\\
  \end{tabular}
\egroup
\right.
$$

\vspace*{-0.25cm}
where $\bullet$ is the standard trace concatenation operator.
\end{definition}
Intuitively, a monitor returns~$\top$ if all continuations ($\continuation$) of $\trace$ satisfy $\spec$; $\bot$ if all possible
continuations of $\trace$ violate $\spec$; $\unknown$ otherwise. The first two outcomes are standard representations of satisfaction and violation, while the third is specific to RV. In more detail, it denotes when the monitor cannot conclude any verdict yet. This is closely related to the fact that RV is applied while the system is still running, and future events may still change the verdict. For instance, a property might be currently satisfied (resp., violated) by the system, but violated (resp., satisfied) in the (still unknown) future. The monitor can only safely conclude any of the two final verdicts ($\top$ or $\bot$) if it is sure such verdict will never change. The addition of the third outcome symbol $?$ helps the monitor to represent its position of uncertainty w.r.t. the current system execution.

A monitor function is usually implemented as a Finite State Machine (FSM), specifically a Moore machine (FSM where the output value of a state is only determined by the state)~\cite{DBLP:conf/fsttcs/BauerLS06,10.1145/2000799.2000800}. 
A Moore machine can be defined as a tuple $\langle Q,q_0,\Sigma,O,\delta,\gamma \rangle$, where $Q$ is a finite set of states, $q_0$ is the initial state, $\Sigma$ is the input alphabet, $O$ is the output alphabet, $\delta:Q\times\Sigma\rightarrow Q$ is the transition function mapping a state and an event to the next state, and $\gamma:Q\rightarrow O$ is the function mapping a state to the output alphabet.

In~\cite{10.1145/2000799.2000800}, Bauer \textit{et al}. present the sequence of steps required to generate from an LTL formula $\spec$ the corresponding Moore machine instantiating the $\stmonitor{\spec}$ function (as summarised in Figure~\ref{fsm-steps-fig}).

\vspace*{-1cm}
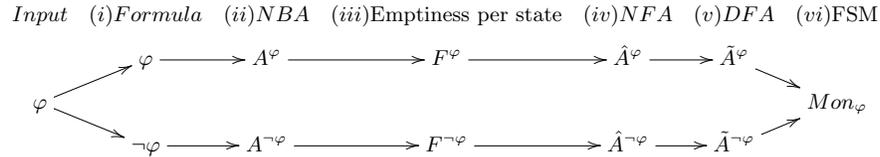
\begin{figure}[h!]
\[
\scalebox{0.90}{
\xymatrix @C=0.3em @R=0.3em{ 
{Input} & {(i)Formula} & {(ii)NBA} & {(iii) \textrm{Emptiness per state}} & {(iv) NFA} & {(v) DFA} & {(vi) \textrm{FSM}} \\
& {\spec} \ar[r] & {A^{\spec}} \ar[r] & {F^{\spec}} \ar[r] & {{\hat{A}}^{\spec}} \ar[r] & {{\tilde{A}}^{\spec}} \ar[rd] & \\
{\spec} \ar[ru] \ar[rd] & & & & & & {\stmonitor{\spec}} \\
& {\lnot \spec} \ar[r] & {A^{\lnot \spec}} \ar[r] & {F^{\lnot \spec}} \ar[r] & {{\hat{A}}^{\lnot \spec}} \ar[r] & {{\tilde{A}}^{\lnot \spec}} \ar[ru] & \\
}
}
\]
\caption{Steps required to generate an FSM from an LTL formula $\spec$. NBA is Non-deterministic B{\"u}chi Automaton, NFA is Non-deterministic Finite Automaton, and DFA is Deterministic Finite Automaton.}\label{fsm-steps-fig}
\end{figure}
\vspace*{-0.5cm}

Given an LTL property $\spec$, a series of transformations is performed on $\spec$, and its negation $\lnot\spec$. Considering $\spec$ in step \emph{(i)}, first, a corresponding NBA $A^\spec$ is generated in step \emph{(ii)}. This can be obtained using Gerth \textit{et al}.'s algorithm ~\cite{DBLP:conf/pstv/GerthPVW95}. Such automaton recognises the set of infinite traces that satisfy $\spec$ (according to LTL semantics). Then, each state of $A^\spec$ is evaluated; the states that when selected as initial states in $A^\spec$ do not generate the empty language are then added to the $F^\spec$ set in step \emph{(iii)}. With such a set, an NFA $\hat{A}^\spec$ is obtained from $A^\spec$ by simply substituting the final states of $A^\spec$ with $F^\spec$ in step \emph{(iv)}. $\hat{A}^\spec$ recognises the finite traces (prefixes) that have at least one infinite continuation satisfying $\spec$ (since the prefix reaches a state in $F^\spec$). After that, $\hat{A}^\spec$ is transformed (Rabin–Scott powerset construction~\cite{DBLP:journals/ibmrd/RabinS59}) into its equivalent deterministic version $\tilde{A}^\spec$ in step \emph{(v)}; this is possible since deterministic and non-deterministic finite automata have the same expressive power. The exact same steps are performed on $\lnot\spec$, which bring to the generation of the $\tilde{A}^{\lnot\spec}$ counterpart. The difference between $\tilde{A}^\spec$ and $\tilde{A}^{\lnot\spec}$ is that the former recognises finite traces which have continuations satisfying $\spec$, while the latter recognises finite traces which have continuations violating $\spec$. Finally, a Moore machine can be generated as a standard automata product between $\tilde{A}^\spec$ and $\tilde{A}^{\lnot\spec}$ in the final step \emph{(vi)}, where the states are denoted as tuples $(q,q')$, with $q$ and $q'$ belonging to $\tilde{A}^\spec$ and $\tilde{A}^{\lnot\spec}$, respectively. The outputs are then determined as: $\top$ if $q'$ does not belong to the final states of $\tilde{A}^{\lnot\spec}$, $\bot$ if $q$ does not belong to the final states of $\tilde{A}^{\spec}$, and $\unknown$ otherwise.
This brings us to the revised monitor construction as follows.

\begin{definition}[Monitor]\label{def:monitor-fsm}
Given an LTL formula $\spec$ and a finite trace $\sigma$, the revised monitor is defined as follows:
\vspace*{-0.25cm}
$$
\stmonitorAppl{\spec}{\trace} =
\left\{
\bgroup
\def\arraystretch{1.2}
  \begin{tabular}{cl}
  $\top$ & {\qquad$\trace\notin\lang{\tilde{A}^{\lnot\spec}}$}\\
  $\bot$ & {\qquad$\trace\notin\lang{\tilde{A}^{\spec}}$}\\
  $\unknown$ & {\qquad$\trace\in\lang{\tilde{A}^{\spec}} \land \trace\in\lang{\tilde{A}^{\lnot\spec}}$}\\
  \end{tabular}
\egroup
\right.
$$

\vspace*{-0.25cm}
where $\lang{A}$ denotes the language recognised by automaton $A$.
\end{definition}

Now that all the background notions have been introduced, we can move on with the more technical parts of the paper. Specifically, in Section~\ref{sec:rv-imperfect-info}, we present how to extend the notion of a monitor in case of imperfect information and, in Section~\ref{sec:rv-rational}, we show how to make a monitor rational in order to reason upon its lack of information. To facilitate the understanding of our technical contribution, we present our case study in the next section, which will serve as a running example in the rest of the paper.

\section{Remote inspection case study}
\label{sec:case-study}


Our case study is based on a 3D simulation of a Jackal\footnote{\url{https://clearpathrobotics.com/jackal-small-unmanned-ground-vehicle}}, a four-wheeled unmanned ground vehicle (referred to as the `rover' from now on), coupled with a simulated radiation sensor, that the rover uses to take radiation readings of points of interest while patrolling around a nuclear facility, and a camera, that the rover uses to inspect images of the nuclear waste barrels in the area. This simulation is based on the work presented in~\cite{robotics10030086}, which explains how the simulated sensor works and how radiation was simulated in the environment. In our version of the simulation the rover is autonomously controlled by a rational/intelligent agent~\cite{WooldridgeRao99:book}. Figure~\ref{fig:screenshot} reports a screenshot of the case study.

\begin{figure}[ht]
    \centering
    \includegraphics[width=0.9\linewidth]{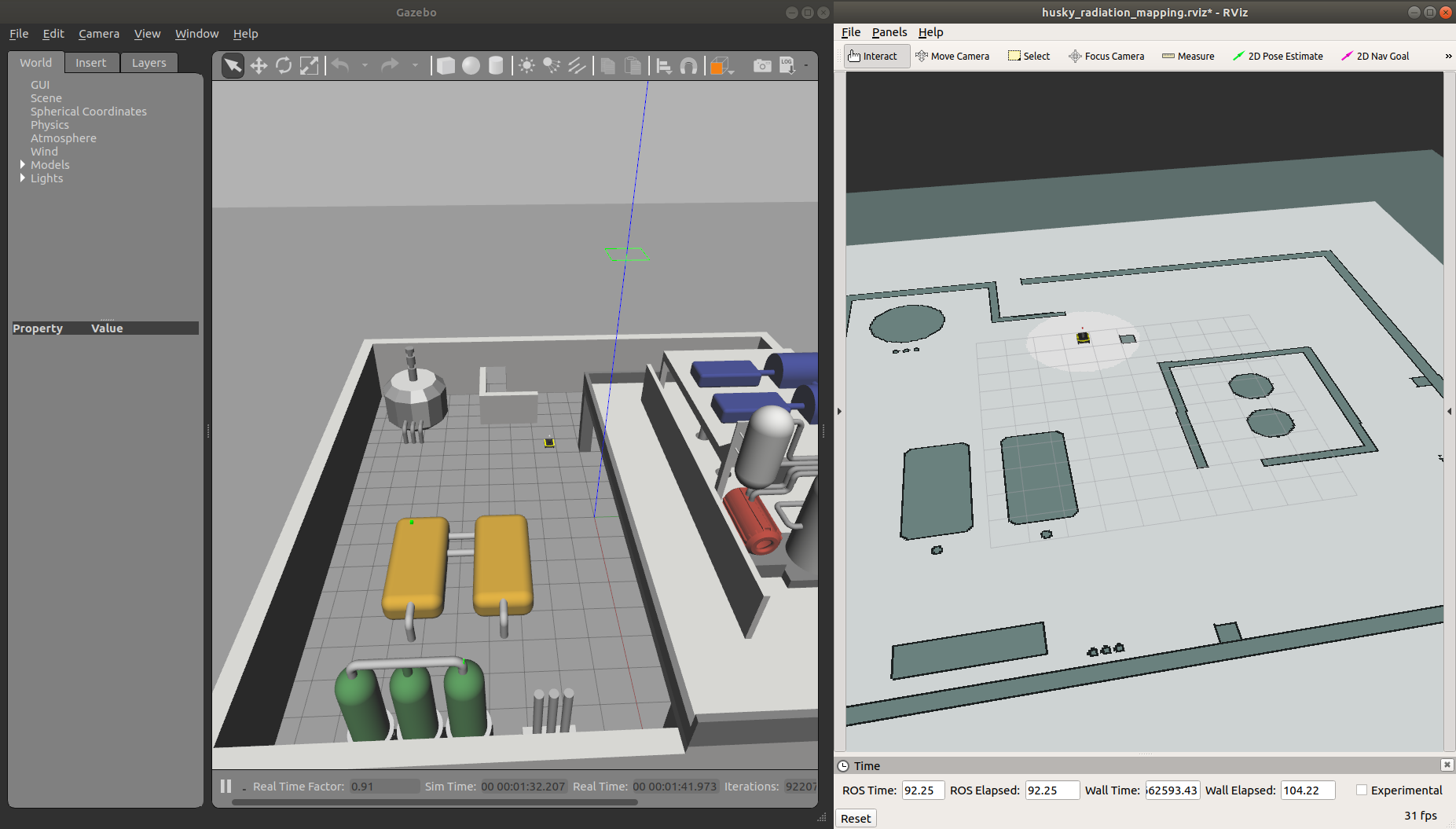}
    \caption{Simulation in Gazebo of the remote inspection of nuclear plant.}
    \label{fig:screenshot}
\end{figure}

A typical mission in our simulation starts with the rover positioned at the entrance of a nuclear facility. The goal of this mission is to inspect a number of points of interest (\textit{i.e.}, waypoints). Inspecting a waypoint serves two purposes: taking radiation readings to check if the radiation is at an acceptable level, and using a camera to detect abnormalities such as leakage in barrels and pipes. After inspecting all of the waypoints, the rover can either return to the entrance to await for a new mission, or keep patrolling and inspecting the waypoints.

Without losing generality, we assume the image captured by the rover's camera can be represented as a grid. Each cell in such a grid can contain, or not, an abnormality (e.g., a cut in the barrel). This information is translated into propositions, that can be transmitted to the monitor to be analysed at runtime.

We assume the presence of a cut on a barrel, denoted by $c$, and the presence of a rust stain, denoted by $s$. Furthermore, we assume the presence of $\alpha$, $\beta$, and $\gamma$ radiations. The act of moving back to base for decontamination is denoted by $mb$, and signalling a warning when a cut is found is denoted by $w$. Therefore, the set of atomic propositions is $\eventSet = \{b^1, b^2, b^3, c, s, \alpha, \beta, \gamma, mb, w\}$\footnote{Note that to simplify the case study, we assume that the rover only needs to analyse three ordered barrels: $b^1$, $b^2$, and $b^3$.}.

The first property we want to verify at runtime is whether the rover finds a cut ($c$) on a barrel and signals the presence of the cut through a warning ($w$). This property can be formulated as the following LTL formula:

$$\spec_1 = \lozenge (c \land \nextOp{w})$$

Furthermore, we are interested in checking whether the rover behaves properly in the presence of high levels of $\gamma$ radiation, consequently aborting the mission and going back to base for decontamination ($mb$). This formula can be formalised as:

$$\spec_2 = \lozenge (\gamma \land (b^1 \lor b^2 \lor b^3) \land \nextOp{mb})$$

Last, but not least, we have a property to check that when the rover inspects a barrel without finding a cut, it continues the inspection to the next barrel. The formula is formalised as follows:

$$\spec_3 = \lozenge ((\lnot c \land b^1 \land \nextOp{b^2}) \lor (\lnot c \land b^2 \land \nextOp{b^3}))$$

In addition, we have three safety objectives. The first concerns whether the rover will not find a cut in the barrel. This property can be formulated as the following LTL formula:

$$\psi_1 = \square ((b^1\lor b^2 \lor b^3) \Rightarrow \nextOp{\neg c})$$

Furthermore, another safety property of interest is to check whether any $\gamma$ radiation is detected away from any barrel, indicating a possible more widespread leakage not focused on a specific barrel. This property can be formalised as:

$$\psi_2 = \square (\gamma \Rightarrow \neg (b^1 \lor b^2 \lor b^3))$$

Last, but not least, we have a property to verify that in absence of $\gamma$ radiation the mission is not aborted. The formula follows:

$$\psi_3 = \square (\lnot\gamma \Rightarrow \lnot mb)$$

To better understand the properties, let us just assume that the trace of events $\trace$ observed by the rover is 
$$\sigma(0) = \{\}, \sigma(1) = \{b^1\},\sigma(2) = \{mb, b^2\}, \sigma(3) = \{\},\sigma(4) = \{w\}$$ 
Note that, when applied to $\sigma$ we have that $\spec_1$ is determined as $?$ by the standard LTL monitor. In fact, no events in $\sigma$ contains $c$. 
Similarly, in $\spec_2$, we once more obtain $?$ because no events in $\sigma$ contains $\gamma$.
Finally, $\spec_3$ is instead determined as $\top$ by the monitor. In fact, in $\sigma(1)$, $b^1$ holds, $c$ does not hold, and in $\sigma(2)$, $b^2$ holds.
On the safety properties side, $\psi_1$ is determined as $?$. This can be derived by the fact that in every event the property holds.
Concerning $\psi_2$, again the result is $?$, this is because the left operand of the implication never holds.
Finally, $\psi_3$ is determined as $\bot$ by the monitor. In fact, event $\sigma(2)$ contains $mb$, but not $\gamma$.

\section{Runtime Verification with Imperfect Information}
\label{sec:rv-imperfect-info}

So far, we have focused on standard RV of LTL properties. However, this standard approach relies on a critical assumption:

\begin{center}
    \textit{The absence of an atomic proposition is equivalent to \\ the negation of that proposition.}
\end{center}

This assumption holds true in formal verification for systems with perfect information -- that is, systems where every component has complete knowledge and visibility of the entire system. While this may be applicable to monolithic and traditional systems, it does not necessarily apply to autonomous, distributed, or artificial intelligence-driven systems. In such scenarios, having a complete view of the system is often not feasible.

Since RV relies on monitoring the system under analysis, if the verified component does not have complete access to the system's information, the monitor will also lack complete access. As a result, our runtime monitor may only observe partial information about the system. Consequently, the event traces provided to the monitor might be missing some atomic propositions, which could be incorrectly interpreted as the negation of those propositions. It is crucial to differentiate between knowing that something is not true and recognising that something is simply unknown.

\subsection{How can we formally represent the imperfect information?}

As previously discussed in this paper and in~\cite{DBLP:conf/sefm/FerrandoM22}, the issue with using LTL in systems with imperfect information lies in confusing the absence of an atomic proposition with its negation. When information is imperfect, the trace may lack certain atomic propositions that are simply unknown or unobservable. To address this, we need a method to explicitly characterise the absence of information. To achieve this, we adopt an approach similar to that in~\cite{DBLP:journals/ai/BelardinelliFM23,DBLP:journals/jair/BelardinelliLMY22}, where atomic propositions are duplicated.

One possible way to represent imperfect information is by allowing indistinguishability on atomic propositions $\eventSet$. To do this we introduce an equivalence relation $\sim$ over $\eventSet$ and its equivalence classes. 

\begin{definition}[Equivalence relation]\label{def:eq_relation}
    An equivalence relation $\sim\in\eventSet\times\eventSet$ determines what a monitor cannot distinguish.
    Specifically, given two atomic propositions $p,q \in \eventSet$, we say that they are indistinguishable if and only if $p \sim q$.
\end{definition}

\begin{definition}[Equivalence class]\label{def:eq_class}
    Given an equivalence relation $\sim$ we define with $\equivclass\in 2^\eventSet$ an equivalence class of $\sim$. Formally, given $p\in\equivclass$, for each $q\in\eventSet$, if $p \sim q$ then $q\in\equivclass$. Additionally, we define the witness of $\equivclass$ with the symbol $[\equivclass]$ and with $\Lambda$ the set of equivalence classes.
\end{definition}

To handle the verification process in the imperfect information context, we need to do some extensions. First of all, we can not simply use the set of atomic propositions $\eventSet$. In particular, we need to replace $\eventSet$ with a new set $\bar{\eventSet}$ that is defined as follows: 
for each $p \in \eventSet$ we have $p_{\top} \in \bar{\eventSet}$ and $p_{\bot} \in \bar{\eventSet}$.
That is, we duplicate the set of atomic proposition to make the truth value explicit. 


Without losing generality, we only consider LTL formulas in Negation Normal Form (NNF). An LTL formula in NNF has only negations at the atom levels (\textit{i.e.}, we only have $\lnot p$). Given an LTL formula, its NNF can be easily obtained by propagating all negations to the atoms. For instance, if we had $\lnot\nextOp p$, we would rewrite it as $\nextOp\lnot p$. The same goes for the other operators.

Now, we present how to generate the explicit version of an LTL formula.

\begin{definition}\label{def:explicit}
Given an LTL formula $\spec$ in NNF and the set of equivalence classes $\Lambda$, we define the explicit version of $\spec$ as follows:
\vspace*{-0.25cm}
\begin{eqnarray*}
\explicit{p} &=& [\lambda]_\top \\
\explicit{\lnot p} &=& [\lambda]_\bot \\
\explicit{\spec\lor\spec'} &=& \explicit{\spec}\lor\explicit{\spec'} \\
\explicit{\nextOp{\spec}} &=& \nextOp{\explicit{\spec}} \\
\explicit{\spec\until\spec'} &=& \explicit{\spec}\until\explicit{\spec'}
\end{eqnarray*}

\vspace*{-0.25cm}
where $\lambda \in \Lambda$ and $p \in \lambda$.
\end{definition}

We now present how to construct the explicit and visible versions of a trace.

\begin{definition}\label{def:track}
	Given a trace $\sigma$ and a set $\eventSet$, we define the explicit version of $\sigma$ as $\sigma_e$, for each element $\sigma(i)$ as follows:
	\vspace*{-0.25cm}
	\begin{itemize}
		\item for all $p \in \sigma(i)$, $p_\top \in \sigma_e(i)$;
		\item for all $p \in \eventSet \setminus \sigma(i)$, $p_\bot \in \sigma_e(i)$.
	\end{itemize}
\end{definition}

\begin{definition}\label{def:trackvis}
	Given an explicit trace $\sigma_e$ and the set of equivalence classes $\Lambda$, we define the visible version of $\sigma_e$ as $\sigma_v$, for each $\sigma(i)$ and $\lambda \in \Lambda$ as follows:
	\vspace*{-0.25cm}
	\begin{itemize}
		\item $[\lambda]_\top \in \sigma_v(i)$ if and only if for all $p \in \lambda$, $p_\top \in \sigma_e(i)$;
		\item $[\lambda]_\bot \in \sigma_v(i)$ if and only if for all $p \in \lambda$, $p_\bot \in \sigma_e(i)$.
	\end{itemize}
\end{definition}

For simplicity, in the rest of the paper, in the trivial case where an equivalence class consists of a single atomic proposition, we omit the square brackets.

Given the above elements, we define a three-valued semantics $\models_3$ for LTL:
\begin{eqnarray*}
(\trace &\models_3& p) = \top \textrm{ if } p_\top \in \trace(0)\\
(\trace &\models_3& p) = \bot \textrm{ if } p_\bot \in \trace(0)\\
(\trace &\models_3& \lnot\spec) = \top \textrm{ if } (\trace \not\models_3 \spec) = \top\\
(\trace &\models_3& \lnot\spec) = \bot \textrm{ if } (\trace \not\models_3 \spec) = \bot \\
(\trace &\models_3& \spec\lor\spec') = \top \textrm{ if } (\trace \models_3 \spec) = \top  \textrm{ or } (\trace \models_3 \spec') = \top \\
(\trace &\models_3& \spec\lor\spec') = \bot \textrm{ if } (\trace \models_3 \spec) = \bot  \textrm{ and } (\trace \models_3 \spec') = \bot \\
(\trace &\models_3& \Circle\spec) = \top \textrm{ if } (\trace^1\models_3\spec) = \top\\
(\trace &\models_3& \Circle\spec) = \bot \textrm{ if } (\trace^1\models_3\spec) = \bot\\
(\trace &\models_3& \spec  \until \spec') = \top \textrm{ if } \exists_{i \geq 0}.(\trace^i\models_3\spec') = \top \textrm{ and } \forall_{0 \leq j <i}.(\trace^j\models_3\spec) = \top \\
(\trace &\models_3& \spec  \until \spec') = \bot \textrm{ if } \forall_{i \geq 0}.(\trace^i\models_3\spec') = \bot \textrm{ or } \exists_{0 \leq j <i}.(\trace^j\models_3\spec) = \bot
\end{eqnarray*}

In all the other cases the truth value is undefined ($uu$).

To help the reader, we conclude the section with the following example.

\begin{example}\label{ex:imp_1}
Considering the rover example presented in Section~\ref{sec:case-study}, let us assume that, the rover is not always capable of sending and detecting all the information to the monitor. 
Because of this, the monitor is sometimes unable to distinguish between a cut and a rust stain, as well as between different kinds of radiation.
In such cases, from the viewpoint of the monitor analysing the scene, there is imperfect information over the atomic propositions. Then, we have imperfect information over $c$ and $s$, which is formalised as $c \sim s$ (\textit{i.e.}, there is an equivalence class $\lambda_{cs}$ between $c$ and $s$), and over $\alpha$, $\beta$, and $\gamma$, which is formalised as $\alpha \sim \beta \sim \gamma$ (\textit{i.e.}, there is an equivalence class $\lambda_{\alpha\beta\gamma}$ between $\alpha$, $\beta$, and $\gamma$).
Since in this scenario we have an equivalence relation between $c$ and $s$ (\textit{i.e.}, $c \sim s$) and between $\alpha$, $\beta$, and $\gamma$ (\textit{i.e.}, $\alpha \sim \beta \sim \gamma$), first we need to explicit the atomic propositions inside the formula, obtaining: 
$$\explicit{\spec_1} = \lozenge ([\lambda_{cs}]_\top \land \nextOp{w_\top})$$ 
$$\explicit{\spec_2} = \lozenge ([\lambda_{\alpha\beta\gamma}]_\top \land (b^1_\top \lor b^2_\top \lor b^3_\top))$$
$$\explicit{\spec_3} = \lozenge (([\lambda_{cs}]_\bot \land b^1_\top \land \nextOp{b^2_\top}) \lor ([\lambda_{cs}]_\bot \land b^2_\top \land \nextOp{b^3_\top}))$$
$$\explicit{\psi_1} = \square ((b^1_\top \lor b^2_\top \lor b^3_\top) \Rightarrow \nextOp{[\lambda_{cs}]_\bot})$$
$$\explicit{\psi_2} = \square ([\lambda_{\alpha\beta\gamma}]_\top \Rightarrow (b^1_\bot \land b^2_\bot \land b^3_\bot))$$
$$\explicit{\psi_3} = \square ([\lambda_{\alpha\beta\gamma}]_\bot \Rightarrow mb_\bot)$$

By using the newly updated LTL formula, we can generate the three automata as shown in Figure~\ref{fsm-steps-fig-ex}.

Note that, in Section~\ref{sec:case-study}, we assumed, as usual in RV, that the monitor had perfect information over the system. Thus, the trace $\trace$ there presented was assumed to perfectly denote the actual events generated by the system execution and observed by the monitor. However, that could not be the case in general, since as we mentioned before, the rover may not have the ability to send/detect all the information to the monitor, causing imperfect information in the latter. Let us now assume that the trace $\trace$ with perfect information was: 
$$\sigma(0) = \{\}, \sigma(1) = \{\gamma,b^1,c\}, \sigma(2) = \{\gamma, c, mb, b^2\}, \sigma(3) = \{c\}, \sigma(4) = \{w\}$$

Thus, we can update the trace of events of Section~\ref{sec:case-study} as well, first by generating its explicit version $\trace_e$ (see Definition~\ref{def:track}), where
$$\trace_e(0) = \{b^1_\bot,b^2_\bot,b^3_\bot, c_\bot, s_\bot, \alpha_\bot, \beta_\bot, \gamma_\bot, mb_\bot, w_\bot\}$$
$$\trace_e(1) = \{b^1_\top,b^2_\bot,b^3_\bot, c_\top, s_\bot, \alpha_\bot, \beta_\bot, \gamma_\top, mb_\bot, w_\bot\}
$$
$$\trace_e(2) = \{b^1_\bot,b^2_\top,b^3_\bot, c_\top, s_\bot, \alpha_\bot, \beta_\bot, \gamma_\top, mb_\top, w_\bot\}$$
$$\trace_e(3) = \{b^1_\bot,b^2_\bot,b^3_\bot, c_\top, s_\bot, \alpha_\bot, \beta_\bot, \gamma_\bot, mb_\bot, w_\bot\}$$
$$\trace_e(4) = \{b^1_\bot,b^2_\bot,b^3_\bot, c_\bot, s_\bot, \alpha_\bot, \beta_\bot, \gamma_\bot, mb_\bot, w_\top\}$$ 

Then by defining its visible version according to the given equivalence classes $\lambda_{cs}$ and $\lambda_{\alpha\beta\gamma}$ (see Definition~\ref{def:trackvis}), we obtain $\trace_v$, where 
$$\trace_v(0) = \{ b^1_\bot,b^2_\bot,b^3_\bot, [\lambda_{cs}]_\bot, [\lambda_{\alpha\beta\gamma}]_\bot, mb_\bot, w_\bot \}$$
$$\trace_v(1) = \{ b^1_\top,b^2_\bot,b^3_\bot, mb_\bot, w_\bot \}$$ 
$$\trace_v(2) = \{ b^1_\bot,b^2_\top,b^3_\bot, mb_\top, w_\bot \}$$ 
$$\trace_v(3) = \{ b^1_\bot,b^2_\bot,b^3_\bot, [\lambda_{\alpha\beta\gamma}]_\bot, mb_\bot, w_\bot  \}$$ 
$$\trace_v(4) = \{ b^1_\bot,b^2_\bot,b^3_\bot, [\lambda_{cs}]_\bot, [\lambda_{\alpha\beta\gamma}]_\bot, mb_\bot, w_\top  \}$$ 
Note that, as expected, the second and third events in $\trace_v$ do not contain information about the atomic propositions $c$ and $\gamma$. Furthermore, there is no information on the atomic proposition $c$ in the fourth event. 
This is determined by the fact that the atomic propositions $c_\top$ and $s_\bot$ hold in the second, third, and fourth events of $\trace_v$, and according to Definition~\ref{def:trackvis}, since $c \sim s$, we can have $[\lambda_{cs}]_\top$ (resp., $[\lambda_{cs}]_\bot$) if and only if both $c_\top$ and $s_\top$ hold (resp., $c_\bot$ and $s_\bot$). Thus, having a mismatch between the two atomic propositions (\textit{i.e.}, one is true while the other is false), we cannot safely add any witness for the equivalence class $\lambda_{cs}$.  
Instead, in the first and last events of $\trace_v$, since we have both $c_\bot$ and $s_\bot$, we can safely add the witness $[\lambda_{cs}]_\bot$ to the trace.
The same reasoning follows for the atomic proposition $\gamma$ and its equivalence class.
\end{example}



\subsection{Re-engineering Monitor with imperfect information}


Given an LTL formula and a visible trace for the monitor, we need a method to use them for performing RV. This can be achieved by extending the standard pipeline for generating LTL monitors (see Figure~\ref{fsm-steps-fig}). This extension involves two specific modifications:
\begin{enumerate}
    \item We use the explicit version of the LTL formula, following Definition~\ref{def:explicit}.
    \item We modify the product between $\tilde{A}^\spec$ and $\tilde{A}^{\lnot\spec}$ to generate the Moore machine representing the monitor.
\end{enumerate}

The resulting extension is illustrated in Figure~\ref{fsm-steps-fig-ex}. In this figure, the explicit version of the LTL formula is generated in step (ii), and the updated product between the automata is obtained in step (vii). The remaining steps are unchanged compared to Figure~\ref{fsm-steps-fig}
%
with the exception that the atomic propositions in the formula are duplicated before using the formula to generate the corresponding NBA, and an additional automaton has been added. This duplication of atomic propositions is crucial, as it completely changes the semantics of the subsequent steps in the monitor synthesis pipeline. Specifically, it is not true that for any given visible trace $\trace_v$, we have $\trace_v\notin\lang{\hat{A}^\spec}\Rightarrow\trace_v\in\lang{\hat{A}^{\lnot\spec}}$, nor $\trace_v\notin\lang{\hat{A}^{\lnot\spec}}\Rightarrow\trace_v\in\lang{\hat{A}^{\spec}}$. 
This means that when a visible trace of events $\trace_v$ is not a good prefix for $\spec$ (\textit{i.e.}, a prefix that can be extended to an infinite trace satisfying $\spec$), it does not necessarily have to be a bad prefix for $\spec$ (\textit{i.e.}, a prefix that cannot be extended to an infinite trace satisfying $\spec$). This aspect is closely related to the reason why a third formula ($\otimes\spec$) has been introduced in Figure~\ref{fsm-steps-fig-ex}. Since duplicating the atomic propositions in the formula breaks the duality between $\spec$ and $\lnot\spec$, a third automaton ($\tilde{A}^{\otimes\spec}$) is needed to recognise all the traces that neither satisfy nor violate $\spec$.
For this reason, we extended the pipeline by adding $\otimes\spec$, which is an abbreviation for $\lnot\explicit{\spec}\land\lnot\explicit{\lnot\spec}$. The automaton $\tilde{A}^{\otimes\spec}$, obtained by following the same steps as for the positive $\tilde{A}^{\explicit{\spec}}$ and negative $\tilde{A}^{\explicit{\lnot\spec}}$ automata, recognises all prefixes for which no continuation satisfying or violating $\spec$ exists.

\begin{center}
\begin{figure}[ht]
\[
\scalebox{0.8}{
\xymatrix @C=0.3em @R=1.0em{ 
{Input} & {(i)Formula} & {(ii)Explicit} & {(iii)NBA} & {(iv) \textrm{Emptiness per state}} & {(v) NFA} & {(vi) DFA} & {(vii) \textrm{FSM}} \\
& {\spec} \ar[r] & \explicit{\spec} \ar[d] \ar[r] & {A^{\explicit{\spec}}} \ar[r] & {F^{\explicit{\spec}}} \ar[r] & {{\hat{A}}^{\explicit{\spec}}} \ar[r] 
& {{\tilde{A}}^{\explicit{\spec}}} \ar[rd] & \\
{\spec} \ar[ru] \ar[rd] & & \otimes\spec \ar[r]
& A^{\otimes\spec} \ar[r] & F^{\otimes\spec} \ar[r] & \hat{A}^{\otimes\spec} \ar[r] & \tilde{A}^{\otimes\spec} \ar[r] & {\stmonitor{\spec}} \\
& {\lnot \spec} \ar[r] & \explicit{\lnot \spec} \ar[u] \ar[r] & {A^{\explicit{\lnot \spec}}} \ar[r] & {F^{\explicit{\lnot \spec}}} \ar[r] & {{\hat{A}}^{\explicit{\lnot \spec}}} \ar[r] 
& {{\tilde{A}}^{\explicit{\lnot \spec}}} \ar[ru] & \\
}
}
\]
\caption{Extended pipeline to consider imperfect information.}\label{fsm-steps-fig-ex}
\end{figure}
\end{center}
Now, we formalise the above reasoning with the following lemma.

\begin{lemma}
	Given a visible finite trace $\trace_v$ and an LTL formula $\varphi$, we have:
	\begin{center}
		$\trace_v\not\in\lang{\hat{A}^{\explicit{\spec}}} \not \Rightarrow \trace_v\in\lang{\hat{A}^{\explicit{\lnot\spec}}}$ \\
		$ \trace_v\not\in\lang{\hat{A}^{\explicit{\lnot\spec}}} \not \Rightarrow \trace_v\in\lang{\hat{A}^{\explicit{\spec}}}$
	\end{center}
\end{lemma}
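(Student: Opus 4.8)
The plan is to refute both displayed non-implications at once by exhibiting a single formula $\spec$ together with one genuine visible trace $\trace_v$ for which $\trace_v\notin\lang{\hat{A}^{\explicit{\spec}}}$ \emph{and} $\trace_v\notin\lang{\hat{A}^{\explicit{\lnot\spec}}}$ hold simultaneously. Once such a pair is in hand, the antecedent of the first implication is true while its consequent is false, and symmetrically for the second, so neither implication can hold in general.

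First I would set up the minimal scenario in which a visible trace can genuinely hide an atom: take a two-element equivalence class $\lambda=\{p,q\}$ with $p\sim q$ (in the running example of Example~\ref{ex:imp_1} one may instantiate $\lambda$ by $\lambda_{cs}$ or $\lambda_{\alpha\beta\gamma}$) and let $\spec=p$. By Definition~\ref{def:explicit}, $\explicit{\spec}=[\lambda]_\top$ and $\explicit{\lnot\spec}=[\lambda]_\bot$; for the synthesis pipeline of Figure~\ref{fsm-steps-fig-ex} each of these is just an atomic proposition over the duplicated alphabet $\bar{\eventSet}$. Running steps (iii)--(v) on the one-atom formula $[\lambda]_\top$ (Gerth et al.'s NBA, emptiness per state, powerset) I would check that the resulting NFA accepts exactly those non-empty finite words whose first letter contains $[\lambda]_\top$, i.e. $\lang{\hat{A}^{[\lambda]_\top}}\cap(2^{\bar{\eventSet}})^{+}=\{\,w \mid [\lambda]_\top\in w(0)\,\}$, and symmetrically $\lang{\hat{A}^{[\lambda]_\bot}}\cap(2^{\bar{\eventSet}})^{+}=\{\,w \mid [\lambda]_\bot\in w(0)\,\}$. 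This is the one place where the automata construction is actually exercised.

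Next I would build the separating trace. Start from the single-step original trace $\trace$ with $\trace(0)=\{p\}$, form its explicit version $\trace_e(0)=\{p_\top,q_\bot,\dots\}$ by Definition~\ref{def:track}, and then its visible version by Definition~\ref{def:trackvis}. Because $[\lambda]_\top$ is added to $\trace_v(0)$ only if \emph{both} $p_\top$ and $q_\top$ are present in $\trace_e(0)$, and $[\lambda]_\bot$ only if both $p_\bot$ and $q_\bot$ are, and here $p$ and $q$ disagree, $\trace_v(0)$ contains neither $[\lambda]_\top$ nor $[\lambda]_\bot$. Substituting $\trace_v$ into the two language characterisations of the previous step yields $\trace_v\notin\lang{\hat{A}^{\explicit{\spec}}}$ and $\trace_v\notin\lang{\hat{A}^{\explicit{\lnot\spec}}}$ at the same time, which is precisely the refutation sought for both implications.

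The only delicate point --- and the step I expect to need most care --- is justifying the language characterisation of $\hat{A}^{[\lambda]_\top}$ and $\hat{A}^{[\lambda]_\bot}$: one must make explicit that the pipeline treats $[\lambda]_\top$ and $[\lambda]_\bot$ as two unrelated atoms of $\bar{\eventSet}$ and hence never constrains a letter to commit to one of them, so that the ``unknown'' letter of $\trace_v(0)$ is a good prefix for neither automaton. This is exactly the informal observation of the surrounding text that duplicating atoms breaks the duality between $\spec$ and $\lnot\spec$, pinned down on a minimal instance; everything else is a mechanical application of Definitions~\ref{def:explicit}, \ref{def:track} and~\ref{def:trackvis}. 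I would close by noting that the same obstruction arises for any $\spec$ whose truth at the current position hinges on an atom lying in a non-trivial equivalence class, so the counterexample is representative rather than pathological.
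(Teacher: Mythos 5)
Your proposal is correct and follows essentially the same route as the paper: both refute the two implications by a single counterexample in which an atom $p$ lies in a non-trivial equivalence class $\{p,q\}$, the original trace makes $p$ and $q$ disagree, and hence the visible trace contains neither $[\lambda]_\top$ nor $[\lambda]_\bot$, so it is accepted by neither $\hat{A}^{\explicit{\spec}}$ nor $\hat{A}^{\explicit{\lnot\spec}}$. The only differences are cosmetic (the paper uses $\spec=\nextOp{p}$ with the disagreement at position $1$, you use $\spec=p$ at position $0$ and spell out the NFA languages more explicitly).
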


\begin{proof}
	Assume we have a visible trace $\trace_v$ that is not included in the NFA $\hat{A}^{\explicit{\spec}}$. To prove our result, we need to show that $\trace_v$ is also not included in $\hat{A}^{\explicit{\lnot\spec}}$. To demonstrate this, suppose $\eventSet = \{p, q, r\}$, $\varphi = \nextOp{p}$, $p \sim q$, and $\trace$ where $\trace(1) = \{p\}$. Given Definitions \ref{def:track} and \ref{def:trackvis}, we can conclude that $\trace_v(1) = \{r_\bot\}$. Therefore, $\trace_v$ does not satisfy $\varphi$ and, consequently, is not included in the NFA $\hat{A}^{\explicit{\spec}}$. However, it is also not included in $\hat{A}^{\explicit{\lnot\spec}}$. This is because $p_\top$ and $p_\bot$ are not included in $\trace_v(1)$. This concludes the first relation. For the second relation, we can use a variant of the above reasoning.
\end{proof}



By adding the third automaton, the corresponding FSM synthesis also needs to change. The revised version is detailed in the following definition. 

\begin{definition}[Monitor with imperfect information]\label{def:monitor-imperfect}
Given an LTL formula $\spec$ and a visible trace $\trace_v$, a monitor with imperfect information is so defined:
$$
\stmonitorApplv{\spec}{\trace_v} =
\left\{
\bgroup
\def\arraystretch{1.2}
  \begin{tabular}{cl}
  $\top$ & {\qquad$ \trace_v\in\lang{\tilde{A}^{\explicit{\spec}}} \land \trace_v\notin\lang{\tilde{A}^{\explicit{\lnot\spec}}}\land \trace_v\notin\lang{\tilde{A}^{\otimes\spec}}$}\\
  $\bot$ & {\qquad$\trace_v\notin\lang{\tilde{A}^{\explicit{\spec}}}\land \trace_v\in\lang{\tilde{A}^{\explicit{\lnot\spec}}} \land \trace_v\notin\lang{\tilde{A}^{\otimes\spec}}$}\\
  $uu$ & {\qquad$\trace_v\notin\lang{\tilde{A}^{\explicit{\spec}}}\land\trace_v\notin\lang{\tilde{A}^{\explicit{\lnot\spec}}} \land \trace_v\in\lang{\tilde{A}^{\otimes\spec}}$}\\
  $?_{\not\bot}$ & {\qquad$\trace_v\in\lang{\tilde{A}^{\explicit{\spec}}} \land \trace_v\notin\lang{\tilde{A}^{\explicit{\lnot\spec}}}\land \trace_v\in\lang{\tilde{A}^{\otimes\spec}}$}\\
  $?_{\not\top}$ & {\qquad$\trace_v\notin\lang{\tilde{A}^{\explicit{\spec}}}\land\trace_v\in\lang{\tilde{A}^{\explicit{\lnot\spec}}}\land\trace_v\in\lang{\tilde{A}^{\otimes\spec}}$}\\
  $\unknown$ & {\qquad$\trace_v\in\lang{\tilde{A}^{\explicit{\spec}}} \land \trace_v\in\lang{\tilde{A}^{\explicit{\lnot\spec}}} \land \trace_v\in\lang{\tilde{A}^{\otimes\spec}}$}\\
  \end{tabular}
\egroup
\right.
$$
\end{definition}

In this definition, we see how the inclusion of a third automaton in the equation allows us to synthesise a finer monitor, in terms of the number of possible outcomes it returns. Compared to Definition~\ref{def:monitor-fsm}, we now have three additional outcomes. Specifically, given a visible trace $\trace_v$, the monitor returns $\top$ if there is no continuation of $\trace_v$ that either violates $\explicit{\spec}$ or makes it undefined. Conversely, it returns $\bot$ if there is no continuation that either satisfies $\explicit{\spec}$ or makes it undefined. With three automata, there is an additional final outcome to consider, which is $uu$. Thus, the monitor returns $uu$ if there is no continuation that either satisfies or violates $\explicit{\spec}$. These first three outcomes derive from the three-valued semantics for LTL. 
Additionally, we may encounter $?_{\not\bot}$, which is read as ``unknown, but it will never be violated from the monitor's point of view''. This outcome is returned when the visible trace $\trace_v$ has no continuation that will eventually violate $\explicit{\spec}$, but there are continuations that satisfy $\explicit{\spec}$ and make it undefined. Symmetrically, we have $?_{\not\top}$, which is read as ``unknown, but it will never be satisfied from the monitor's point of view''. This outcome is the dual of the previous one, where no continuations satisfying $\explicit{\spec}$ can be found, but continuations that violate $\explicit{\spec}$ and make it undefined exist. Lastly, we may encounter $\unknown$, denoting the completely unknown case. This outcome concerns situations where the monitor cannot yet conclude anything, as there exist continuations satisfying $\explicit{\spec}$, continuations violating $\explicit{\spec}$, and continuations that make it undefined.
 
\begin{remark}
Note that, in Definition~\ref{def:monitor-imperfect}, not all possible combinations are included. Specifically, it is not possible to have $\trace_v \notin \lang{\tilde{A}^{\explicit{\spec}}} \land \trace_v \notin \lang{\tilde{A}^{\explicit{\lnot\spec}}} \land \trace_v \notin \lang{\tilde{A}^{\otimes\spec}}$ and $\trace_v \in \lang{\tilde{A}^{\explicit{\spec}}} \land \trace_v \in \lang{\tilde{A}^{\explicit{\lnot\spec}}} \land \trace_v \notin \lang{\tilde{A}^{\otimes\spec}}$. The former is not possible because, by the definition of the three-valued semantics for LTL, there exists at least one automaton that includes the trace. The latter follows from the fact that it is unfeasible, given the nature of a visible trace, for a formula to be both true and false but not undefined in the future.
\end{remark}

In what follows, we provide two preservation results from the monitor with imperfect information to the one with perfect information.

\begin{lemma}
	Given a finite trace $\sigma$, a monitor with its visibility $\stmonitorApplv{\spec}{\trace}$, and a general monitor $\stmonitorAppl{\spec}{\trace}$, we have that:
	\begin{align}
		\text{if } \stmonitorApplv{\spec}{\trace_v} = \top & \text{ then } \stmonitorAppl{\spec}{\trace} = \top \\
		\text{if } \stmonitorApplv{\spec}{\trace_v} = \bot & \text{ then } \stmonitorAppl{\spec}{\trace} = \bot
	\end{align}
\end{lemma}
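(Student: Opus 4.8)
The plan is to peel off the hypothesis with Definition~\ref{def:monitor-imperfect}, rephrase the three language conditions as a single statement about \emph{all} infinite continuations of $\trace_v$, and then transport that statement from the imperfect-information world to the perfect-information one by a structural induction relating the three-valued semantics $\models_3$ of $\explicit{\spec}$ on the visible trace to the classical semantics $\models$ of $\spec$ on the underlying trace.

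\textbf{Unfolding.} Assume $\stmonitorApplv{\spec}{\trace_v}=\top$. By Definition~\ref{def:monitor-imperfect} this amounts to $\trace_v\in\lang{\tilde{A}^{\explicit{\spec}}}$, $\trace_v\notin\lang{\tilde{A}^{\explicit{\lnot\spec}}}$ and $\trace_v\notin\lang{\tilde{A}^{\otimes\spec}}$. By the pipeline of Figure~\ref{fsm-steps-fig-ex}, $\tilde{A}^{\explicit{\lnot\spec}}$ accepts exactly the finite traces admitting an infinite continuation satisfying $\explicit{\lnot\spec}$ and, since $\otimes\spec=\lnot\explicit{\spec}\land\lnot\explicit{\lnot\spec}$, $\tilde{A}^{\otimes\spec}$ accepts exactly those admitting a continuation on which $\explicit{\spec}$ is undefined. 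Using the standard correspondence between $\models_3$ of $\explicit{\spec}$ and classical satisfaction of $\explicit{\spec}$, $\explicit{\lnot\spec}$, and $\otimes\spec$, the two non-membership conditions together say that for \emph{every} infinite continuation $\continuation$ over the witness alphabet we have $(\trace_v\bullet\continuation\models_3\explicit{\spec})=\top$ (being neither $\bot$ nor $uu$, and $\models_3$ always returns one of the three on an infinite trace), the first condition being then automatic. Dually, $\stmonitorApplv{\spec}{\trace_v}=\bot$ forces $(\trace_v\bullet\continuation\models_3\explicit{\spec})=\bot$ for every $\continuation$.

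\textbf{Correspondence lemma.} The key step is to prove, by structural induction on $\spec$ in NNF, that for every $\rho\in(2^\eventSet)^\omega$ with visible version $\rho_v$ (Definitions~\ref{def:track} and~\ref{def:trackvis}), $(\rho_v\models_3\explicit{\spec})=\top$ implies $\rho\models\spec$ and $(\rho_v\models_3\explicit{\spec})=\bot$ implies $\rho\not\models\spec$. The base cases carry the content: if $(\rho_v\models_3 p)=\top$ then $[\lambda]_\top\in\rho_v(0)$, so by Definition~\ref{def:trackvis} every $q\in\lambda$ — in particular $p$ — holds in $\rho(0)$, hence $\rho\models p$; the case $\explicit{\lnot p}=[\lambda]_\bot$ is symmetric, since $[\lambda]_\bot\in\rho_v(0)$ forces $p\notin\rho(0)$. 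The cases for disjunction, next-time and until reduce to the matching clauses of $\models_3$ and $\models$ together with the fact that the visible-trace construction is applied event by event, so $(\rho^i)_v=(\rho_v)^i$.

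\textbf{Conclusion and main obstacle.} Fix an arbitrary $w\in(2^\eventSet)^\omega$ and put $\rho=\trace\bullet w$. Since Definitions~\ref{def:track} and~\ref{def:trackvis} act position-wise, $\rho_v$ coincides with $\trace_v$ on its first $|\trace|$ events, \textit{i.e.}, $\rho_v=\trace_v\bullet\continuation$ for a suitable infinite continuation $\continuation$; this $\continuation$ is one of those quantified over in the unfolding step, so $(\rho_v\models_3\explicit{\spec})=\top$, and by the correspondence lemma $\trace\bullet w\models\spec$. As $w$ was arbitrary, $\stmonitorAppl{\spec}{\trace}=\top$ by Definition~\ref{rv-def}, and the $\bot$ case is entirely symmetric using the $\bot$ halves of the two preceding steps. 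I expect the correspondence lemma to be the main obstacle, as it is the only place where soundness of the visible trace with respect to the real one is actually used: one must verify that the information lost to mixed equivalence classes can at worst demote a verdict to $uu$ and never flips $\top$ into $\bot$ or conversely; the rest is bookkeeping over the pipeline of Figure~\ref{fsm-steps-fig-ex} and over the definitions of $\trace_e$ and $\trace_v$.
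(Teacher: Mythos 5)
Your proposal is correct and follows essentially the same route as the paper: the heart of both arguments is a structural induction on the formula in NNF whose base cases use Definitions~\ref{def:track} and~\ref{def:trackvis} to show that $[\lambda]_\top\in\trace_v(i)$ forces $p\in\trace(i)$ (and dually for $[\lambda]_\bot$), so a visible-trace verdict of $\top$ (resp.\ $\bot$) can only be issued when the underlying trace genuinely satisfies (resp.\ violates) $\spec$. Your version is in fact more careful than the paper's, which elides the unfolding of the automata conditions into a universal statement over infinite continuations and the observation that visible versions of real continuations are among those quantified over; these are exactly the bookkeeping steps you make explicit.
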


\begin{proof}
$\\$
$\\$
(1)	Suppose $\stmonitorApplv{\spec}{\trace_v} = \top$. This means that the visible trace $\trace_v$ satisfies the formula $\explicit{\varphi}$. We want to prove that the original trace $\trace$ satisfies the formula $\varphi$. To do this, given $\trace_v$, by Definitions \ref{def:track} and \ref{def:trackvis}, we know that for each $\trace_v(i)$, for all $p_\top \in \trace_v(i)$, $p \in \trace(i)$, and for all $p_\bot \in \trace_v(i)$, $p \notin \trace(i)$. Given the above reasoning, we need to provide an induction proof over the structure of the formula $\explicit{\varphi}$.

\textbf{Case:} $\explicit{\varphi} = p_\top$. So, $\varphi = p$. By hypothesis, $\stmonitorApplv{\spec}{\trace_v} = \top$. By the semantics of three-valued LTL, this means that $p_\top \in \trace_v(0)$, and by Definitions \ref{def:track} and \ref{def:trackvis}, $p \in \trace(0)$. Therefore, $\stmonitorAppl{\spec}{\trace} = \top$.

\textbf{Case:} $\explicit{\varphi} = p_\bot$. Thus, $\varphi = \neg p$. By hypothesis, $\stmonitorApplv{\spec}{\trace_v} = \top$. By the semantics of three-valued LTL, this means that $p_\bot \in \trace_v(0)$, and by Definitions \ref{def:track} and \ref{def:trackvis}, $p \notin \trace(0)$. Therefore, $\stmonitorAppl{\spec}{\trace} = \top$.

Since in the inductive cases the transformation of Definition \ref{def:explicit} does not change the structure and elements of the formula, we can conclude the proof.\\

\noindent (2) Suppose $\stmonitorApplv{\spec}{\trace_v} = \bot$. This means that the visible trace $\trace_v$ does not satisfy the formula $\explicit{\varphi}$. We want to prove that the original trace $\trace$ does the same for the formula $\varphi$. As in the previous case, we need to prove the implication by induction over the structure of the formula $\explicit{\varphi}$ for the base cases.

\textbf{Case:} $\explicit{\varphi} = p_\top$. So, $\varphi = p$. By hypothesis, $\stmonitorApplv{\spec}{\trace_v} = \bot$. By the semantics of three-valued LTL, this means that $p_\bot \in \trace_v(0)$, and by Definitions \ref{def:track} and \ref{def:trackvis}, $p \notin \trace(0)$. Therefore, $\stmonitorAppl{\spec}{\trace} = \bot$.

\textbf{Case:} $\explicit{\varphi} = p_\bot$. Thus, $\varphi = \neg p$. By hypothesis, $\stmonitorApplv{\spec}{\trace_v} = \bot$. By the semantics of three-valued LTL, this means that $p_\top \in \trace_v(0)$, and by Definitions \ref{def:track} and \ref{def:trackvis}, $p \in \trace(0)$. Therefore, $\stmonitorAppl{\spec}{\trace} = \bot$.

\end{proof}

Given the above results, we can deduce the following corollary.

\begin{corollary}
    Given a visible finite trace $\trace_v$ and an LTL formula $\varphi$, we have:
	\begin{center}
		$\trace_v\not\in\lang{\hat{A}^{\explicit{\spec}}} \Rightarrow \trace_v\in\lang{\hat{A}^{\explicit{\lnot\spec}}} \lor \trace_v\in\lang{\tilde{A}^{\otimes\spec}}$ \\
		$ \trace_v\notin\lang{\hat{A}^{\explicit{\lnot\spec}}} \Rightarrow \trace_v\in\lang{\hat{A}^{\explicit{\spec}}} \lor \trace_v\in\lang{\tilde{A}^{\otimes\spec}}$
		$\trace_v\notin\lang{\tilde{A}^{\otimes\spec}} \Rightarrow \trace_v\in\lang{\hat{A}^{\explicit{\lnot\spec}}} \lor \trace_v\in\lang{\hat{A}^{\explicit{\spec}}}$ \\
	\end{center}
\end{corollary}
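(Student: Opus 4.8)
The plan is to reduce all three implications to a single validity observation about the formula underlying the third automaton. First I would unfold the abbreviation: $\otimes\spec$ stands for $\lnot\explicit{\spec}\land\lnot\explicit{\lnot\spec}$, so the disjunction $\explicit{\spec}\lor\explicit{\lnot\spec}\lor\otimes\spec$ has propositional shape $A\lor B\lor(\lnot A\land\lnot B)$, which is equivalent to $A\lor B\lor\lnot(A\lor B)$ and hence to $true$. Since this equivalence holds pointwise, $\explicit{\spec}\lor\explicit{\lnot\spec}\lor\otimes\spec$ is valid as an LTL formula over the duplicated alphabet $\bar{\eventSet}$, i.e., every infinite word over $2^{\bar{\eventSet}}$ satisfies it, reading $p_\top$ and $p_\bot$ as ordinary, unrelated atoms exactly as the synthesis pipeline of Figure~\ref{fsm-steps-fig-ex} does.

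Next I would recall the semantics of the automata produced in steps (iv)--(vi). By the emptiness-per-state construction of $F^{\explicit{\spec}}$, the NFA $\hat{A}^{\explicit{\spec}}$ -- and equally its deterministic counterpart $\tilde{A}^{\explicit{\spec}}$, since the powerset construction preserves the recognised language -- accepts exactly those finite traces $\trace$ that admit some infinite continuation $\continuation$ with $\trace\bullet\continuation\models\explicit{\spec}$; the analogous characterisations hold for $\explicit{\lnot\spec}$ and for $\otimes\spec$, and in particular $\lang{\tilde{A}^{\otimes\spec}}=\lang{\hat{A}^{\otimes\spec}}$. Combining this with the validity above: given any finite trace $\trace_v$, I would pick an arbitrary infinite continuation $\continuation$; the word $\trace_v\bullet\continuation$ satisfies at least one of $\explicit{\spec}$, $\explicit{\lnot\spec}$, $\otimes\spec$, hence $\trace_v$ lies in the corresponding one of $\lang{\hat{A}^{\explicit{\spec}}}$, $\lang{\hat{A}^{\explicit{\lnot\spec}}}$, $\lang{\tilde{A}^{\otimes\spec}}$. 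Thus the three languages cover $(2^{\bar{\eventSet}})^*$.

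Each of the three displayed implications is then just a reading of this covering: if $\trace_v\notin\lang{\hat{A}^{\explicit{\spec}}}$, then membership in the union forces $\trace_v\in\lang{\hat{A}^{\explicit{\lnot\spec}}}$ or $\trace_v\in\lang{\tilde{A}^{\otimes\spec}}$, and the remaining two implications follow symmetrically by starting from $\hat{A}^{\explicit{\lnot\spec}}$ and from $\tilde{A}^{\otimes\spec}$ respectively. I expect the only delicate point to be the change of semantics: because the pipeline duplicates the atoms, the three-valued semantics $\models_3$ and the special shape of visible traces play no role here, and the argument must be -- and can cleanly only be -- carried out in plain two-valued LTL over $2^{\bar{\eventSet}}$, which is precisely the setting in which the duality between $\spec$ and $\lnot\spec$ is broken; that broken duality is exactly what makes the third automaton, and hence this covering property, non-vacuous rather than trivially inherited from Definition~\ref{def:monitor-fsm}.
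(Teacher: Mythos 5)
Your proposal is correct, and it supplies an argument the paper itself leaves implicit: the paper offers no proof of this corollary, merely asserting that it follows from ``the above results,'' and the closest it comes to a justification is the earlier remark that ``by the definition of the three-valued semantics for LTL, there exists at least one automaton that includes the trace.'' Your route makes that covering claim precise, and does so in what is arguably the right setting: since $\otimes\spec$ abbreviates $\lnot\explicit{\spec}\land\lnot\explicit{\lnot\spec}$, the disjunction $\explicit{\spec}\lor\explicit{\lnot\spec}\lor\otimes\spec$ is a propositional tautology over the duplicated alphabet $\bar{\eventSet}$, and because the pipeline builds all three automata by the standard (two-valued) LTL-to-NBA translation followed by the emptiness-per-state and powerset steps, every finite trace has some continuation satisfying at least one of the three formulas and hence lies in at least one of the three languages; each displayed implication is then just the contrapositive reading of this covering. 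Your closing observation --- that the three-valued semantics $\models_3$ and the structure of visible traces play no role, and that the argument lives entirely in plain LTL over $2^{\bar{\eventSet}}$ where $p_\top$ and $p_\bot$ are unrelated atoms --- is a genuine improvement in precision over the paper's appeal to the three-valued semantics, and it correctly explains why the covering is non-trivial (the broken duality) yet still immediate (a tautology). The only cosmetic point is the paper's inconsistent mixing of $\hat{A}$ and $\tilde{A}$ in the statement, which you rightly dispose of by noting that the powerset construction preserves the recognised language.
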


\begin{example}\label{ex:imp_2}
Considering once more the running example presented in Section~\ref{sec:case-study} and Example~\ref{ex:imp_1}.
Thanks to our three-value semantics and the presence of explicit atomic propositions, the trace $\trace$ which was erroneously classifying $\spec_3$ as $\top$ and $\psi_3$ as $\bot$ from the standard LTL monitor before, now is classified as $?_{\not\bot}$ and $?_{\not\top}$, respectively. The semantics of the two verdicts is fundamentally different, as well as the reaction that the system should have. In the first case, by using a standard LTL monitor, the verdict returned by the monitor was $\top$. Thus, the agent controlling the rover could have used such information to continue the inspection with another barrel and not detecting a danger. In the second case, by using the extended LTL monitor that we presented in this work, the verdict returned by the monitor was $?_{\not\bot}$. Thus, the agent controlling the rover could use this information to, for instance, ask the rover to check again, maybe taking another picture. Even though this is a simple example, it allows us to show how our extension tackles the foundations of the imperfect information issue.
A complementary reasoning can be followed for $\psi_3$. The rest of the properties are not reported because the extended LTL monitor keeps returning an inconclusive verdict as the standard LTL monitor.
\end{example}



\section{Runtime Verification with Rational Monitor}
\label{sec:rv-rational}

In Section~\ref{sec:rv-imperfect-info}, we demonstrated how to engineer a monitor with imperfect information and specified it by extending the classical pipeline for creating a standard three-valued monitor. Note that, through this contribution, we enable the monitor to avoid incorrect truth values, ensuring the correctness of the truth value returned by the monitor. However, the monitor is ``passive'' concerning imperfect information. In this section, we will focus on improving the monitor's ability to handle imperfect information, making it ``active''. To do this, we introduce the concept of resources on the monitor's visibility. Specifically, the monitor is able to select the atoms that it wants to see w.r.t. its limited resources. We will introduce in detail this notion in Section~\ref{sec:resource}.
We further analyse the monitor's ability by introducing in Section~\ref{sec:dynamic} the notion of reactive monitor. To achieve this, we assume that the monitor can dynamically modify its visibility in different time windows. In Figure~\ref{fig:rational-monitor-hierarchy}, we provide a high-level overview, with a sort of class diagram, of the relationship between the rational, active, and reactive monitors.

\tikzset{every picture/.style={line width=0.75pt}} 

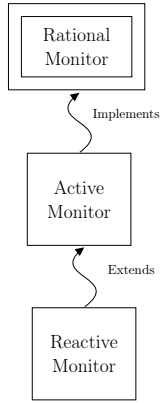
\begin{figure}
\centering
\scalebox{0.5}{

\begin{tikzpicture}[x=0.75pt,y=0.75pt,yscale=-1,xscale=1]

\draw   (27,24) -- (164.5,24) -- (164.5,111) -- (27,111) -- cycle(151.45,37.05) -- (40.05,37.05) -- (40.05,97.95) -- (151.45,97.95) -- cycle ;
\draw   (46,175) -- (150.5,175) -- (150.5,268) -- (46,268) -- cycle ;
\draw    (99,174) .. controls (138.2,144.6) and (61.67,144.01) .. (96.24,115.76) ;
\draw [shift={(98.5,114)}, rotate = 143.13] [fill={rgb, 255:red, 0; green, 0; blue, 0 }  ][line width=0.08]  [draw opacity=0] (8.93,-4.29) -- (0,0) -- (8.93,4.29) -- cycle    ;
\draw   (51,331) -- (155.5,331) -- (155.5,424) -- (51,424) -- cycle ;
\draw    (104,330) .. controls (143.2,300.6) and (66.67,300.01) .. (101.24,271.76) ;
\draw [shift={(103.5,270)}, rotate = 143.13] [fill={rgb, 255:red, 0; green, 0; blue, 0 }  ][line width=0.08]  [draw opacity=0] (8.93,-4.29) -- (0,0) -- (8.93,4.29) -- cycle    ;

\draw (95.75,67.5) node  [font=\Large] [align=left] {\begin{minipage}[lt]{60.7pt}\setlength\topsep{0pt}
\begin{center}
Rational \\Monitor
\end{center}

\end{minipage}};
\draw (145.75,136.5) node  [font=\normalsize] [align=left] {\begin{minipage}[lt]{55.47pt}\setlength\topsep{0pt}
\begin{center}
Implements
\end{center}

\end{minipage}};
\draw (98.25,221.5) node  [font=\Large] [align=left] {\begin{minipage}[lt]{51.69pt}\setlength\topsep{0pt}
\begin{center}
Active\\Monitor
\end{center}

\end{minipage}};
\draw (150.75,292.5) node  [font=\normalsize] [align=left] {\begin{minipage}[lt]{39.59pt}\setlength\topsep{0pt}
\begin{center}
Extends
\end{center}

\end{minipage}};
\draw (103.25,377.5) node  [font=\Large] [align=left] {\begin{minipage}[lt]{59.87pt}\setlength\topsep{0pt}
\begin{center}
Reactive\\Monitor
\end{center}

\end{minipage}};

\end{tikzpicture}

}
    \caption{Rational Monitor hierarchy.}
    \label{fig:rational-monitor-hierarchy}
\end{figure}



\subsection{How reasoning upon resources can help fighting imperfect information?}
\label{sec:resource}

As pointed out before, we introduce the notion of resources in order to grant the monitor the ability of reasoning upon its own visibility.
That is, when we claim that a monitor does not have visibility over a set of atomic propositions, this does not mean the monitor does not have a mean to access such information (in general), but, that by accessing such information may incur a certain cost.

Let us consider a robotic scenario where a robot operates in an environment and can only access information provided by its sensors. Such information can be incomplete. In fact, the robot may have limited energy consumption capabilities (\textit{i.e.}, it has a resource bound), making it unreasonable to access all sensors at once. This limitation affects how much power the robot can allocate to its various functions, such as accessing and operating its sensors. Consequently, the robot must manage its energy efficiently, which may involve prioritising certain sensors over others, thus causing incomplete information, to conserve energy and ensure longer operational periods.

In the rest of the section, we formalise the notion of resources and costs associated to the monitor.

Given a set of atomic propositions $\Sigma$ and an equivalence relation $\sim$, we can derive the set of equivalence classes $\equivset$ (according to Definition~\ref{def:eq_class}). 
Thus, we define $cost$: $\equivset \rightarrow \mathbb{N}$ that assign for each equivalence class $\equivclass \in \equivset$ a natural number $cost(\equivclass)$. The latter represents the cost for the monitor to make visible (or break) the atomic propositions involved in the equivalence class $\equivclass$. 

In this work, we assume our monitor has a limited number of resources and, by consequence, it could be unable to break all the equivalence classes.

So, given a cost function and the resource bound of the monitor, the latter needs to determine the best selection of the atomic propositions. To do so, we assume another function $\pay$: $\equivset \rightarrow \mathbb{N}$ that assigns for each $\equivclass$ a natural number that represents the expected payoff of breaking $\equivclass$. 

To generate a payoff, we need to consider the formula $\varphi$ under examination and the atomic propositions involved in it. To do this, we can define another function $metric_\varphi: \Sigma \rightarrow [0,1]$ that assigns a metric $metric_\varphi(p)$ to each atomic proposition. There are different approaches to producing the $\pay$ function. In the rest of the paper, we assume that $\pay(\equivclass) = \sum_{p \in \equivclass} metric_\varphi(p)$, meaning the payoff of $\equivclass$ is given by the sum of all the metrics of the atomic propositions involved in $\equivclass$.

To assign a metric for each atomic proposition, we need to consider the relevance of it in terms of the LTL formula under exam. In particular, we first need to assign a value for each syntactic element in the LTL syntax and then study the structure of the formula to determine the corresponding metrics. 
\begin{example}\label{ex:metric}
For instance, we can consider the following evaluation:
\begin{eqnarray*}
metric_q(p) &=&  0 \; with \; p \neq q\\
metric_p(p) &=&  1\\
metric_{\lnot p}(p) &=& 1 \\ 
metric_{\spec \lor \spec'}(p) &=& (metric_{\spec}(p) + metric_{\spec'}(p)) / 2 \\
metric_{\spec \land \spec'}(p) &=& max(metric_{\spec}(p), metric_{\spec'}(p))\\
metric_{\spec\Rightarrow\spec'}(p) &=& (metric_{\spec}(p) + metric_{\spec'}(p)) / 2 \\ 
metric_{\nextOp{\spec}}(p) &=& 0.5 \times metric_{\spec}(p) \\
metric_{\spec\until\spec'}(p) &=& 0.3 \times metric_{\spec}(p) + 0.7 \times metric_{\spec'}(p)\\
metric_{\spec\release\spec'}(p) &=& 0.3 \times metric_{\spec}(p) + 0.7 \times metric_{\spec'}(p)
\end{eqnarray*}
\end{example}

Given the $cost$ function and the $\pay$ function, our aim is to let the monitor able to select the best set of equivalence classes to break. To select such a subset we can use classic dynamic programming approaches like the one of the knapsack problem in which it is possible to minimise the costs while maximising the metrics.

Algorithm~\ref{alg:active_monitor} outlines the process for achieving active monitoring. Initially, the algorithm selects which indistinguishability relations to break (lines 2 and 3), leveraging the well-known Knapsack algorithm~\cite{cormen2022introduction} to balance the trade-off between the cost and the payoff of breaking an indistinguishability relation (line 2). This step can be executed through various methods, with the Knapsack algorithm being just one possible optimisation technique. Note that, since the Knapsack algorithm is performed on sets of atomic propositions (the equivalence classes), we define an order on the equivalence classes with the same payoff. That is, if two equivalence classes score the same payoff, then the Knapsack algorithm chooses to break the equivalence classes with the greater number of atomic propositions. In case the number of atomic propositions is the same, then the choice is random.
Subsequently, the algorithm updates the indistinguishability relation based on the Knapsack problem's output (line 3). 

Next, a monitor for the formula $\varphi$ is generated (line 4), and the trace $\sigma$ is updated in accordance with the new indistinguishability relation (lines 5--7). By the end of the loop, $\sigma$ represents the monitor's visible trace based on its visibility criteria (what we previously referred in the paper as $\sigma_v$). Finally, the monitor's outcome is returned (line 8).

{
\begin{algorithm}[t]
\caption{ActiveMonitor $\langle \sigma,\varphi,\sim,\pay, cost, \mathbf{b} \rangle$}
\label{alg:active_monitor}
\begin{algorithmic}[1]
\State{$count = 1$}
\State{$break = knapsack(\pay, cost, \mathbf{b})$}
\State{$\sim' \;= \;(\sim \setminus\; break)$}
\State{$Mon_\varphi = Monitor(\varphi)$}
\While{$count \leq |\sigma|$}
\State{$\sigma[count] = \sigma[count] \setminus \sim'$}
\State{$count = count + 1$}
\EndWhile
\State{\textbf{return } $Mon_\varphi(\sigma[1,count])$}
\end{algorithmic}
\end{algorithm}
}

\begin{example}\label{ex:active}
    Considering the running example presented in Section~\ref{sec:case-study} and revisited in Example~\ref{ex:imp_1} and Example~\ref{ex:imp_2}, we now show the impact of an active monitor. 
    Suppose that the cost of breaking the equivalence class $\equivclass_{cs}$ is $2$ and the cost of breaking the equivalence class $\equivclass_{\alpha\beta\gamma}$ is $3$.
    Given the metric of Example~\ref{ex:metric} and the formula $\spec_1$, we can determine the payoff of the atoms of the equivalence classes $\equivclass_{cs}$ and $\equivclass_{\alpha\beta\gamma}$ (\textit{i.e.}, $metric_{\spec_1}$). 
    In more detail, we obtain that the payoff to break the equivalence class $\equivclass_{cs}$ is $0.7$, which can be obtained through the computation: $\pay(\equivclass_{cs}) = metric_{\spec_1}(c) + metric_{\spec_1}(s)$, with $metric_{\spec_1}(c)=0.3 \times 0 + 0.7 \times max(1, 0.5 \times 0) = 0.7$, and $metric_{\spec_1}(s)= 0.3 \times 0 + 0.7 \times max(0, 0.5 \times 0) = 0.0$.
    With the same reasoning, we obtain that $\pay(\equivclass_{\alpha\beta\gamma})=0.0$, since no atomic proposition in $\equivclass_{\alpha\beta\gamma}$ is of interest for the verification of $\spec_1$.
    By assuming a bound $\textbf{b}$ greater or equal than $2$, the ActiveMonitor can break $\equivclass_{cs}$ and generate a new trace of events:
$$\trace_v(0) = \{ b^1_\bot,b^2_\bot,b^3_\bot, c_\bot,s_\bot, [\equivclass_{\alpha\beta\gamma}]_\bot, mb_\bot, w_\bot \}$$
$$\trace_v(1) = \{ b^1_\top,b^2_\bot,b^3_\bot,c_\top,s_\bot, mb_\bot, w_\bot \}$$ 
$$\trace_v(2) = \{ b^1_\bot,b^2_\top,b^3_\bot,c_\top,s_\bot, mb_\top, w_\bot \}$$ 
$$\trace_v(3) = \{ b^1_\bot,b^2_\bot,b^3_\bot,c_\top,s_\bot,[\equivclass_{\alpha\beta\gamma}]_\bot, mb_\bot, w_\bot  \}$$ 
$$\trace_v(4) = \{ b^1_\bot,b^2_\bot,b^3_\bot, c_\bot,s_\bot, [\equivclass_{\alpha\beta\gamma}]_\bot, mb_\bot, w_\top  \}$$
    Thanks to its ability, the ActiveMonitor can conclude with $\top$ for the liveness property $\spec_1$.
    This is determined by the fact that in $\trace_v(3)$ the atomic proposition $c_\top$ holds and in $\trace_v(4)$ the atomic proposition $w_\top$ holds.
    The same reasoning can be followed for the liveness property $\spec_2$, in which instead of breaking $\equivclass_{cs}$, the $\equivclass_{\alpha\beta\gamma}$ equivalence class is broken. Note that, to accomplish the task of breaking the latter equivalence class the ActiveMonitor needs to have $\textbf{b}$ greater or equal than $3$. 
    The same reasoning can be followed to handle the safety formulas $\psi_1$ and $\psi_2$ presented in Section~\ref{sec:case-study}.
\end{example}

\subsection{How can we formally represent the dynamic behaviour of the monitor?}
\label{sec:dynamic}

In the previous section, we have introduced the notion of active monitor. However, our goal is to introduce a monitor able to dynamically reason upon its imperfect information. Thus, in this section, we introduce the notion of ``reactive'' monitor.

First, we need to introduce the concept of a \emph{time window}. A time window allows us to split the input trace into different frames. We can assume that within each frame, the monitor's visibility is static, while it can change between frames, making it dynamic. At the end of each frame, the monitor can reallocate its resources (\textit{i.e.}, determine which equivalence relations are in place). 

For each frame, the monitor can use the approach proposed in Section 4.1, allowing it to adjust the allocation of its resources. However, this alone is not sufficient to make the monitor rational. The presence of time windows does not inherently ensure dynamicity in the monitor's visibility. The parameters that need to change include the formula under examination and the associated payoff. Without a new formula, the monitor would break the same equivalence relations in each time frame.

To update the formula, we propose removing the sub-formulas that have been verified (or violated) during the previous frames. With this new formula, we can generate a new payoff. Using this updated payoff, the monitor can adapt its strategy via the optimisation algorithm, selecting new equivalence relations to break.


Algorithm~\ref{alg:reactive_monitor} outlines the steps required to synthesize and apply a reactive monitor.
Notably, the structure of Algorithm~\ref{alg:reactive_monitor} closely resembles that of Algorithm~\ref{alg:active_monitor}, particularly in the initialisation steps (lines 1--4), where the Knapsack algorithm updates the indistinguishability relation and synthesises the monitor for the formula $\varphi$. The primary difference lies in the introduction of an if statement (lines 6--10), where the indistinguishability relation ($\sim$) and the trace ($\sigma$) are updated. This update occurs whenever the time window (specified as input to the algorithm) expires (checked on line 6). At this point, the reactive monitor updates the formula $\varphi$ based on past events observed in $\sigma$ (line 7), reflecting what the monitor has verified considering these past events. Subsequently, the payoff is updated in accordance with the new formula $\varphi$ (line 8). The Knapsack algorithm is then iteratively applied to update the indistinguishability relation (lines 9--10).
The rest of the algorithm proceeds in the same manner as Algorithm~\ref{alg:active_monitor}.

{
\begin{algorithm}[t]
\caption{ReactiveMonitor $\langle \sigma,\varphi,\sim,\pay, cost, \mathbf{b}, window \rangle$}
\label{alg:reactive_monitor}
\begin{algorithmic}[1]
\State{$count = 1$}
\State{$break = knapsack(\pay, cost, \mathbf{b})$}
\State{$\sim' \;= \;(\sim \setminus\; break)$}
\State{$Mon_\varphi = Monitor(\varphi)$}
\While{$count \leq |\sigma|$}
\If{$count \textbf{ mod } window = 0$}
\State{update $\varphi$ according to $Mon_\varphi(\sigma[1,count])$}
\State{update $\pay$ according to $\varphi$}
\State{$break = knapsack(\pay, cost, \mathbf{b})$}
\State{$\sim' \;= \;(\sim \setminus\; break)$}
\EndIf
\State{$\sigma[count] = \sigma[count] \setminus \sim'$}
\State{$count = count + 1$}
\EndWhile
\State{\textbf{return } $Mon_\varphi(\sigma[1,count])$}
\end{algorithmic}
\end{algorithm}
}

\begin{remark}
Algorithm~\ref{alg:reactive_monitor} updates the LTL formula at execution time solely to apply the metric and calculate the new payoffs for breaking equivalence classes, which is necessary due to the metric being defined on the LTL formula's structure. However, the monitor itself does not require updates during the execution of the system. Thus, the resulting Moore machine used for evaluating $\trace$ is independent of the current state of $\sim$, as demonstrated in Figure~\ref{fsm-steps-fig-ex} (\textit{i.e.}, $\sim$ is not an input for the monitor synthesis). The updated $\sim$ is only relevant for identifying the appropriate events based on the monitor's current visibility (line 11).
\end{remark}

\begin{example}\label{ex:reactive}
    Considering one last time the running example presented in Section~\ref{sec:case-study} and Examples~\ref{ex:imp_1}--\ref{ex:active}.
    Let us assume the monitor needs to verify a combination of the formulas previously introduced. For example, we may consider a new formula $\psi=\psi_1 \lor \psi_2$.
    Now, if we consider an ActiveMonitor, with the same resource bound as in Example~\ref{ex:active}, we can easily note that the monitor cannot determine a different truth value of the monitor introduced in Section~\ref{sec:rv-imperfect-info}. The reason lies in the need of breaking two equivalence classes; however, since the cost of breaking both classes (\textit{i.e.}, $5$) is greater than the bound (\textit{i.e.}, $3$), the ActiveMonitor has not the ability to conclude.
    This brings us to the use of a ReactiveMonitor instead, which, differently from the ActiveMonitor counterpart, is capable of selecting which equivalence classes to break dynamically. 
    Suppose that we have a time window of $2$, the ReactiveMonitor can break first $\equivclass_{\alpha\beta\gamma}$ and by doing so falsifying $\psi_2$ in $\trace_v(1)$. Then, the monitor can break $\equivclass_{cs}$ in the second time window (\textit{i.e.}, $\trace_v(2)$ and $\trace_v(3)$). By doing so, it can falsify $\psi_1$ in $\trace_v(3)$.
    However, notice that this resolution depends on the chosen payoff. In fact, if the monitor had selected first $\equivclass_{cs}$ and then $\equivclass_{\alpha\beta\gamma}$ it would not have concluded the violation of $\phi$.
    In our example, we have the following metrics: 
    $$metric_{\psi}(c) = (metric_{\psi_1} + metric_{\psi_2}) / 2 = $$
    $$(((0.3 \times 0) + (0.7 \times ((0.0)+(0.5 \times 1))/2)) + 0.0) / 2 = (0.7 \times 0.25) = 0.175$$
    $$metric_{\psi}(s) = 0.0$$
    $$metric_{\psi}(\gamma) = (metric_{\psi_1} + metric_{\psi_2}) / 2 = $$
    $$(0.0 + ((0.3 \times 0)+(0.7 \times ((1.0 + 0.0)/2)))) / 2 = (0.7 \times 0.5) / 2  = 0.175$$
    $$metric_{\psi}(\alpha) = 0.0$$
    $$metric_{\psi}(\beta) = 0.0$$
    Since we assumed that when two equivalence classes have the same payoff the one having the greater number of atoms is chosen, in our example the ReactiveMonitor would break $\equivclass_{\alpha\beta\gamma}$ first.
\end{example}

To conclude this section, Figure~\ref{fig:overview} recaps an overview of our approach's pipeline. Beginning with an LTL formula $\varphi$, a payoff function is generated (step 1) and used to apply the Knapsack algorithm (step 2). Following this, the monitor is synthesised (step 3). If the entire trace $\sigma$ has been analysed, the monitor can return the outcome (step 4b). If not (step 4a), the formula is revised in accordance with $\sigma$ (step 5), and a new payoff is generated (step 6), initiating another iteration of the pipeline. Note that when considering an active monitor, steps 4a, 5, and 6 can be omitted since we are dealing with a single time window.

\tikzset{every picture/.style={line width=0.75pt}} 

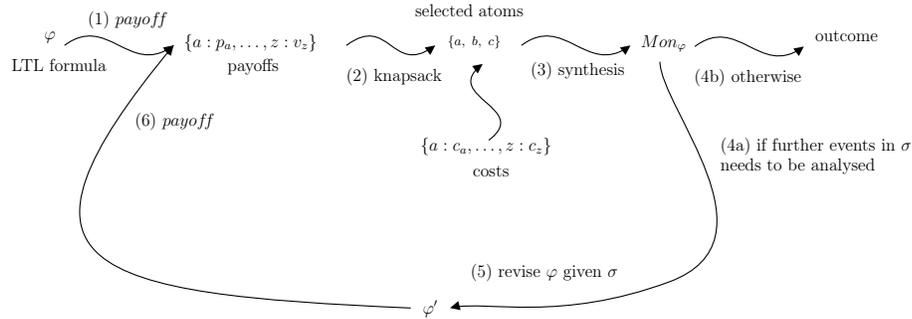
\begin{figure}

\scalebox{0.55}{

\begin{tikzpicture}[x=0.75pt,y=0.75pt,yscale=-1,xscale=1]

\draw    (58,48) .. controls (97.2,18.6) and (117.19,75.64) .. (155.62,49.69) ;
\draw [shift={(158,48)}, rotate = 143.13] [fill={rgb, 255:red, 0; green, 0; blue, 0 }  ][line width=0.08]  [draw opacity=0] (8.93,-4.29) -- (0,0) -- (8.93,4.29) -- cycle    ;
\draw    (316.5,48) .. controls (355.7,18.6) and (357.92,75.64) .. (395.65,49.69) ;
\draw [shift={(398,48)}, rotate = 143.13] [fill={rgb, 255:red, 0; green, 0; blue, 0 }  ][line width=0.08]  [draw opacity=0] (8.93,-4.29) -- (0,0) -- (8.93,4.29) -- cycle    ;
\draw    (447,135) .. controls (486.2,105.6) and (404.87,95.41) .. (439.25,66.78) ;
\draw [shift={(441.5,65)}, rotate = 143.13] [fill={rgb, 255:red, 0; green, 0; blue, 0 }  ][line width=0.08]  [draw opacity=0] (8.93,-4.29) -- (0,0) -- (8.93,4.29) -- cycle    ;
\draw    (477,48) .. controls (516.2,18.6) and (536.19,75.64) .. (574.62,49.69) ;
\draw [shift={(577,48)}, rotate = 143.13] [fill={rgb, 255:red, 0; green, 0; blue, 0 }  ][line width=0.08]  [draw opacity=0] (8.93,-4.29) -- (0,0) -- (8.93,4.29) -- cycle    ;
\draw    (636,49) .. controls (675.2,19.6) and (695.19,76.64) .. (733.62,50.69) ;
\draw [shift={(736,49)}, rotate = 143.13] [fill={rgb, 255:red, 0; green, 0; blue, 0 }  ][line width=0.08]  [draw opacity=0] (8.93,-4.29) -- (0,0) -- (8.93,4.29) -- cycle    ;
\draw    (604,63) .. controls (604.5,105) and (715.5,227) .. (612.5,265) .. controls (512.59,301.86) and (446.55,282.27) .. (414.39,288.36) ;
\draw [shift={(411.5,289)}, rotate = 345.53] [fill={rgb, 255:red, 0; green, 0; blue, 0 }  ][line width=0.08]  [draw opacity=0] (8.93,-4.29) -- (0,0) -- (8.93,4.29) -- cycle    ;
\draw    (375.5,289) .. controls (75.5,276) and (-12.5,253) .. (158,48) ;
\draw [shift={(158,48)}, rotate = 129.75] [fill={rgb, 255:red, 0; green, 0; blue, 0 }  ][line width=0.08]  [draw opacity=0] (8.93,-4.29) -- (0,0) -- (8.93,4.29) -- cycle    ;

\draw (38,34) node [anchor=north west][inner sep=0.75pt]  [font=\large] [align=left] {$\displaystyle \varphi $};
\draw (78,15) node [anchor=north west][inner sep=0.75pt]  [font=\large] [align=left] {(1) $\pay$};
\draw (166,36) node [anchor=north west][inner sep=0.75pt]  [font=\large] [align=left] {$\displaystyle \{a:p_{a} ,\dotsc ,z:v_{z}\}$};
\draw (206,59) node [anchor=north west][inner sep=0.75pt]  [font=\large] [align=left] {payoffs};
\draw (8,59) node [anchor=north west][inner sep=0.75pt]  [font=\large] [align=left] {LTL formula};
\draw (383,131) node [anchor=north west][inner sep=0.75pt]  [font=\large] [align=left] {$\displaystyle \{a:c_{a} ,\dotsc ,z:c_{z}\}$};
\draw (431,157) node [anchor=north west][inner sep=0.75pt]  [font=\large] [align=left] {costs};
\draw (408,37) node [anchor=north west][inner sep=0.75pt]   [align=left] {$\displaystyle \{a,\ b,\ c\}$};
\draw (378,9) node [anchor=north west][inner sep=0.75pt]  [font=\large] [align=left] {selected atoms};
\draw (315,67) node [anchor=north west][inner sep=0.75pt]  [font=\large] [align=left] {(2) knapsack};
\draw (484,62) node [anchor=north west][inner sep=0.75pt]  [font=\large] [align=left] {(3) synthesis};
\draw (585,37) node [anchor=north west][inner sep=0.75pt]  [font=\large] [align=left] {$\displaystyle Mon_{\varphi }$};
\draw (745,33) node [anchor=north west][inner sep=0.75pt]  [font=\large] [align=left] {outcome};
\draw (658,131) node [anchor=north west][inner sep=0.75pt]  [font=\large] [align=left] {(4a) if further events in $\displaystyle \sigma $ \\needs to be analysed};
\draw (385,280) node [anchor=north west][inner sep=0.75pt]  [font=\large] [align=left] {$\displaystyle \varphi '$};
\draw (429,247) node [anchor=north west][inner sep=0.75pt]  [font=\large] [align=left] {(5) revise $\displaystyle \varphi $ given $\displaystyle \sigma $};
\draw (121,108) node [anchor=north west][inner sep=0.75pt]  [font=\large] [align=left] {(6) $\pay$};
\draw (634,68) node [anchor=north west][inner sep=0.75pt]  [font=\large] [align=left] {(4b) otherwise};

\end{tikzpicture}

}
    \caption{Overview of the pipeline of the work.}
    \label{fig:overview}
\end{figure}


\section{Implementation}
\label{sec:implementation}

The prototype implementing the theory discussed in this paper is publicly available on GitHub\footnote{\scriptsize\url{https://github.com/AngeloFerrando/RationalMonitor}}. It consists of a Python script that implements the entire pipelines illustrated in Figure~\ref{fsm-steps-fig-ex} and Figure~\ref{fig:overview}. Python was chosen for its rich library ecosystem, particularly the Spot library\footnote{\scriptsize\url{https://spot.lrde.epita.fr/}}~\cite{DBLP:conf/mascots/Duret-LutzP04}, which is well-suited for automaton manipulation. Specifically for the monitor synthesis, we utilised Spot to automatically generate a Non-deterministic B\"{u}chi Automaton (NBA) from an LTL formula, corresponding to step (iii) in Figure~\ref{fsm-steps-fig-ex}, which is the most complex and computationally demanding step in the pipeline. The remainder of the pipeline was directly implemented in Python.

The prototype is encapsulated in a Python class named `RationalMonitor'. To create an instance of `RationalMonitor', the constructor requires the following inputs:

\begin{enumerate}[label=(\roman*)]
    \item An LTL formula to verify;
    \item A set of atomic propositions;
    \item One or more equivalence classes over the same set of atomic propositions;
    \item The metric function to evaluate and assign payoffs to equivalence classes;
    \item The costs associated with breaking these equivalence classes;
    \item The resource bound for the monitor;
    \item A time window;
    \item A trace of events to analyse.
\end{enumerate}

These parameters are related to the inputs required by Algorithm~\ref{alg:reactive_monitor}, which subsumes those of Algorithm~\ref{alg:active_monitor}. In fact, in case of an active monitor, the time window is not necessary and can be left unspecified.

Using this information, a finite state machine (FSM) representing the monitor, as defined in Definition~\ref{def:monitor-imperfect}, is constructed. The monitor then analyses the input trace and returns the corresponding verdict to the user. The trace is typically stored in a file (e.g., a log file). These input parameters can be supplied as command-line arguments to the tool. However, since the monitor is represented as a single data structure, it is also possible—and often practical—to import the script and use the monitor programmatically (\textit{i.e.}, the prototype can be used as a RV library as well). This flexibility is particularly useful for online verification scenarios, in addition to offline verification.

\begin{remark}
Although Algorithm~\ref{alg:active_monitor} and Algorithm~\ref{alg:reactive_monitor} are presented in an offline scenario, where the complete event trace $\trace$ is provided as input (e.g., from a log file), our implementation adopts an incremental approach. In practice, the monitor operates on a growing prefix of the trace, reflecting the ongoing execution of the system. This incremental adaptation does not alter the core logic of our algorithms; it is a minimal change made to align with practical use. We presented the offline version in Section~\ref{sec:rv-rational} to enhance readability and clarify the rationale behind selecting which indistinguishability relations to break.
\end{remark}

Having discussed the theoretical foundations of our approach and introduced the resulting prototype, we now turn our attention to the experiments conducted using this prototype. These experiments include applications to the remote inspection case study presented in this paper, as well as stress test scenarios designed to explore the prototype’s performance and its interaction with various metrics.

\subsection{Experimental results}

We evaluated our implementation from three distinct perspectives. 

First, we assessed our tool based on two critical aspects: generation time and verification time. The generation time refers to the execution time required to synthesise a monitor given an LTL formula (as defined in Definition~\ref{def:monitor-imperfect}). On the other hand, the verification time concerns the execution time needed to analyse a trace of events using the synthesised monitor. It is important to distinguish between these two evaluations, as monitor generation typically occurs offline, prior to system execution. Therefore, the most crucial factor in assessing runtime verification techniques is the verification time, as it directly impacts system performance by introducing runtime overhead.

Next, we validated our technique against the remote inspection case study presented in Section~\ref{sec:case-study}, ensuring that our expected outcomes aligned with the event traces analysed throughout the study.

Finally, we tested our technique by varying the metrics used to assign payoffs to the indistinguishability relations. These experiments served two purposes: to demonstrate the impact of selecting effective metrics versus poor ones, and to stress-test our implementation using a set of randomly generated LTL properties and traces.

\subsubsection{Overhead experiments.}

We began with the overhead experiments, conducting tests for both key aspects: monitor synthesis and verification. For the synthesis experiments, we varied the size of the LTL formula, defined by the number of operators within it. The size was chosen as the target for these experiments because it directly drives the monitor's generation\footnote{It is important to note that steps (iii) and (vi) in Figure~\ref{fsm-steps-fig-ex} are particularly computationally expensive, requiring exponential time relative to the formula size.}. 

For the verification experiments, we varied the length of the trace of events to be analysed. The trace length was selected because it is the only factor that influences the monitor's verification time. This can be easily understood by considering that once the FSM is generated, it remains unchanged. Its size is fixed, determined by the formula's complexity. Therefore, at runtime, the only variable affecting verification time is the length of the trace, which consists of events generated during system execution.

\begin{figure}
\centering
\begin{subfigure}{.49\textwidth}
  \centering
  \includegraphics[width=\linewidth]{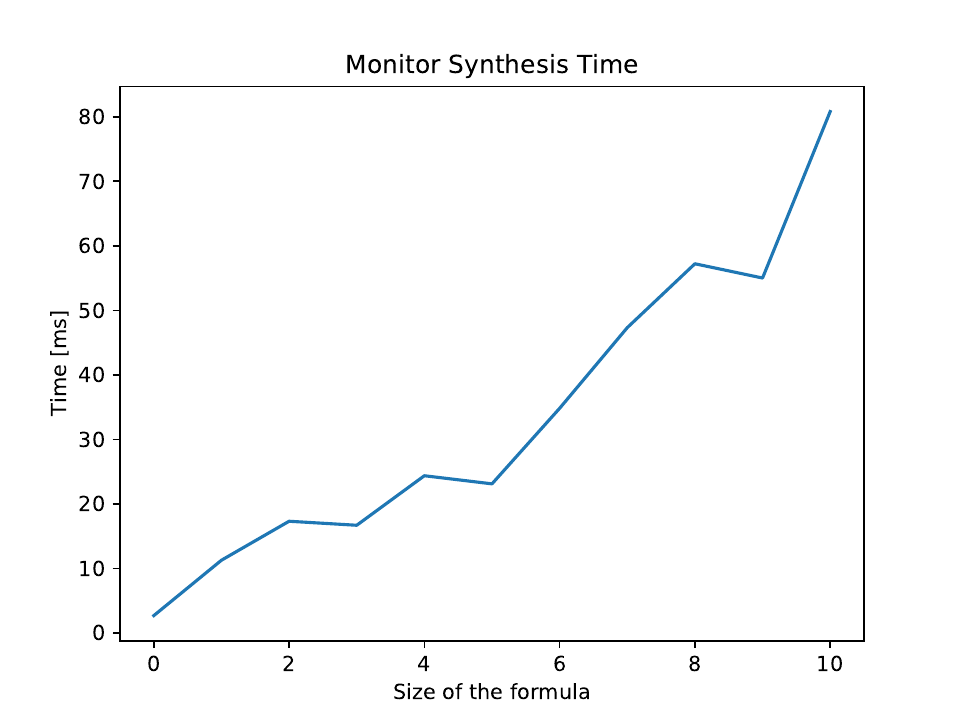}
  \caption{Time to synthesise a monitor.}
  \label{fig:generation}
\end{subfigure}
\hfill
\begin{subfigure}{.49\textwidth}
  \centering
  \includegraphics[width=\linewidth]{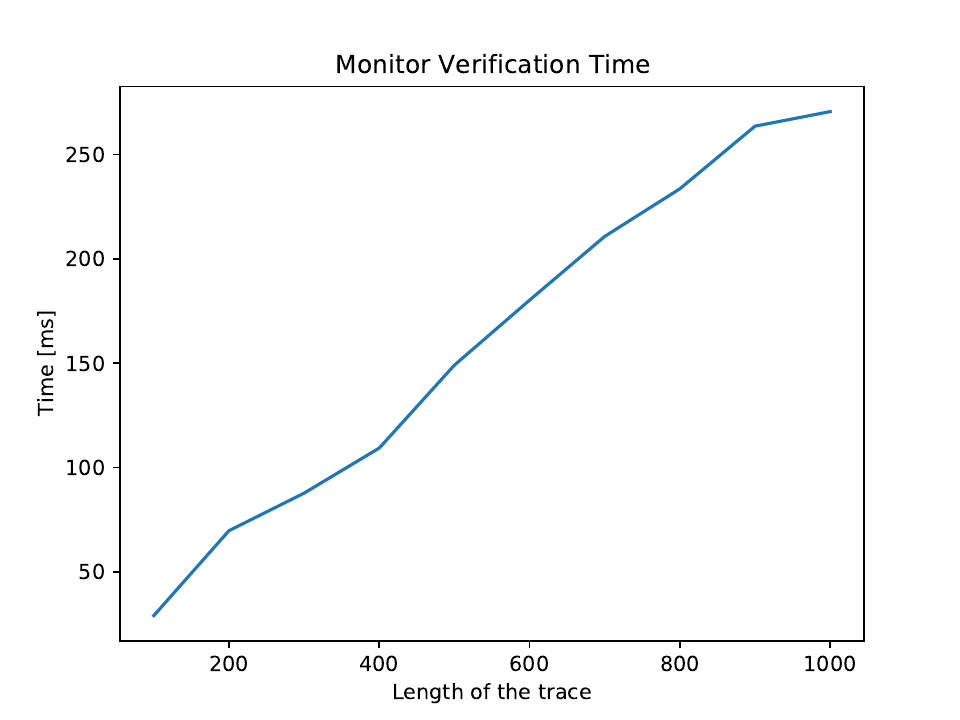}
  \caption{Time to verify a trace.}
  \label{fig:verification}
\end{subfigure}
\caption{Experimental results.}
\label{fig:experiments}
\end{figure}

Figure~\ref{fig:experiments} presents the results of our experiments, where both LTL formulas and traces were randomly generated. Specifically, Figure~\ref{fig:generation} shows the execution time required to synthesise a monitor given an LTL formula, while Figure~\ref{fig:verification} displays the execution time needed to analyse a trace of events using the synthesised monitor. 
In Figure~\ref{fig:generation}, the x-axis represents the size of the LTL formula, and the y-axis shows the execution time in milliseconds. As anticipated, the execution time for monitor synthesis increases exponentially with the size of the formula. 
In Figure~\ref{fig:verification}, the x-axis corresponds to the length of the event trace, and the y-axis again represents execution time in milliseconds. Notably, the execution time grows linearly with the trace length, which is critical for using the monitor at runtime while the system is operational. Since the execution time is linear relative to the trace length, the time required for the monitor to analyse each individual event in the trace remains constant. This ensures that the monitor can incrementally analyse events as they are generated by the system during runtime\footnote{By ``incrementally'', we mean that the monitor analyses events one by one, as opposed to offline runtime verification where the monitor expects the entire trace at once.}. 

\begin{remark}
It is important to note that the synthesis and execution process is identical for all imperfect information monitors, including the extended LTL monitor, the active monitor, and the reactive monitor. In essence, the active and reactive monitors are equivalent to the imperfect information monitor (as defined in Definition~\ref{def:monitor-imperfect}). In fact, they merely permit the breaking of indistinguishability relations. However, this modification does not impact the performance, synthesis, and execution of the underlying monitor.
\end{remark}

\subsubsection{Experiments on the case study.}

The second set of experiments we conducted focused on the case study and its formal properties. Specifically, we tested our implementation against each property presented in Section~\ref{sec:case-study} using all the types of monitors discussed in this paper. The results of these experiments are summarised in Table~\ref{tab:3v}. For each property, we considered the global trace of events described in Example~\ref{ex:imp_1}. Depending on the monitor type used, we reported the corresponding visible trace as observed by that monitor.

For instance, the standard monitor would observe the imperfect information trace without recognising that the missing events are not false but simply absent. In contrast, imperfect information monitors would consider both explicit events (as defined in Definition~\ref{def:track}) and the equivalence classes resulting from the indistinguishability relation. Additionally, for active and reactive monitors, we reported the visible trace that results from breaking indistinguishability relations according to the metric outlined in Example~\ref{ex:metric}.

{
\begin{table*}[t]
\footnotesize
\begin{tabular}{|c|c|c|c|c|c|c|c|c|}
\hline
\textbf{Global trace} & \textbf{Visible trace} & $\varphi_1$ & $\varphi_2$ & $\varphi_3$ & $\psi_1$ & $\psi_2$ & $\psi_3$ & $\psi_1 \lor \psi_2$ \\ \hline
\multicolumn{9}{c}{LTL monitor} \\ \hline
$\{\}$ & $\{\}$ & & & & & & & \\
$\{\gamma,b^1,c\}$ & $\{b^1\}$ & & & & & & & \\
$\{\gamma,c,mb,b^2\}$ & $\{mb,b^2\}$ & $?$ & $?$ & $\top$ & $?$ & $?$ & $\bot$ & $?$ \\
$\{c\}$ & $\{\}$ & & & & & & & \\
$\{w\}$ & $\{w\}$ & & & & & & & \\ \hline
\multicolumn{9}{c}{LTL monitors with imperfect information} \\ \hline
$\{\}$ & $\{ b^1_\bot,b^2_\bot,b^3_\bot, [\gamma_{cs}]_\bot, [\gamma_{\alpha\beta\gamma}]_\bot, mb_\bot, w_\bot \}$ & & & & & & & \\
$\{\gamma,b^1,c\}$ & $\{ b^1_\top,b^2_\bot,b^3_\bot, mb_\bot, w_\bot \}$ & & & & & & & \\
$\{\gamma,c,mb,b^2\}$ & $\{ b^1_\bot,b^2_\top,b^3_\bot, mb_\top, w_\bot \}$ & $?_{\not\bot}$ & $?_{\not\bot}$ & $?_{\not\bot}$ & $?_{\not\top}$ & $?_{\not\top}$ & $?_{\not\top}$ & $?_{\not\top}$ \\
$\{c\}$ & $\{ b^1_\bot,b^2_\bot,b^3_\bot, [\gamma_{\alpha\beta\gamma}]_\bot, mb_\bot, w_\bot \}$ & & & & & & & \\
$\{w\}$ & $\{ b^1_\bot,b^2_\bot,b^3_\bot, [\gamma_{cs}]_\bot, [\gamma_{\alpha\beta\gamma}]_\bot, mb_\bot, w_\top \}$ & & & & & & & \\ \hline
\multicolumn{9}{c}{Active monitor 1} \\ \hline
$\{\}$ & $\{ b^1_\bot,b^2_\bot,b^3_\bot, c_\bot,s_\bot, [\gamma_{\alpha\beta\gamma}]_\bot, mb_\bot, w_\bot \}$ & & & & & & & \\
$\{\gamma,b^1,c\}$ & $\{ b^1_\top,b^2_\bot,b^3_\bot,c_\top,s_\bot, mb_\bot, w_\bot \}$ & & & & & & & \\
$\{\gamma,c,mb,b^2\}$ & $\{ b^1_\bot,b^2_\top,b^3_\bot,c_\top,s_\bot, mb_\top, w_\bot \}$ & $\top$ & $?_{\not\bot}$ & $?$ & $\bot$ & $?_{\not\top}$ & $?_{\not\top}$ & $?_{\not\top}$ \\
$\{c\}$ & $\{ b^1_\bot,b^2_\bot,b^3_\bot,c_\top,s_\bot,[\gamma_{\alpha\beta\gamma}]_\bot, mb_\bot, w_\bot  \}$ & & & & & & & \\
$\{w\}$ & $\{ b^1_\bot,b^2_\bot,b^3_\bot, c_\bot,s_\bot, [\gamma_{\alpha\beta\gamma}]_\bot, mb_\bot, w_\top  \}$ & & & & & & & \\ 
\hline
\multicolumn{9}{c}{Active monitor 2} \\ \hline
$\{\}$ & $\{ b^1_\bot,b^2_\bot,b^3_\bot, [\gamma_{cs}]_\bot, \alpha_\bot, \beta_\bot, \gamma_\bot, mb_\bot, w_\bot \}$ & & & & & & & \\
$\{\gamma,b^1,c\}$ & $\{ b^1_\top,b^2_\bot,b^3_\bot, \alpha_\bot, \beta_\bot, \gamma_\top, mb_\bot, w_\bot \}$ & & & & & & & \\
$\{\gamma,c,mb,b^2\}$ & $\{ b^1_\bot,b^2_\top,b^3_\bot, \alpha_\bot, \beta_\bot, \gamma_\top, mb_\top, w_\bot \}$ & $\top$ & $?_{\not\bot}$ & $?$ & $?_{\not\top}$ & $\bot$ & $?_{\not\top}$ & $?_{\not\top}$ \\
$\{c\}$ & $\{ b^1_\bot,b^2_\bot,b^3_\bot,\alpha_\bot, \beta_\bot, \gamma_\bot, mb_\bot, w_\bot  \}$ & & & & & & & \\
$\{w\}$ & $\{ b^1_\bot,b^2_\bot,b^3_\bot, [\gamma_{cs}]_\bot, \alpha_\bot, \beta_\bot, \gamma_\bot, mb_\bot, w_\top  \}$ & & & & & & & \\ 
\hline
\multicolumn{9}{c}{Reactive monitor} \\ \hline
$\{\}$ & $\{ b^1_\bot,b^2_\bot,b^3_\bot, [\gamma_{cs}]_\bot, \alpha_\bot, \beta_\bot, \gamma_\bot, mb_\bot, w_\bot \}$ & & & & & & & \\
$\{\gamma,b^1,c\}$ & $\{ b^1_\top,b^2_\bot,b^3_\bot, \alpha_\bot, \beta_\bot, \gamma_\top, mb_\bot, w_\bot \}$ & & & & & & & \\
$\{\gamma,c,mb,b^2\}$ & $\{ b^1_\bot,b^2_\top,b^3_\bot, c_\top, s_\bot, mb_\top, w_\bot \}$ & $\top$ & $?_{\not\bot}$ & $?$ & $\bot$ & $\bot$ & $?_{\not\top}$ & $\bot$ \\
$\{c\}$ & $\{ b^1_\bot,b^2_\bot,b^3_\bot, c_\top, s_\bot, [\gamma_{\alpha\beta\gamma}]_\bot, mb_\bot, w_\bot \}$ & & & & & & & \\
$\{w\}$ & $\{ b^1_\bot,b^2_\bot,b^3_\bot, [\gamma_{cs}]_\bot, [\gamma_{\alpha\beta\gamma}]_\bot, mb_\bot, w_\top \}$ & & & & & & & \\ \hline
\end{tabular}
\caption{Results on applying the prototype implementation on the case study. Note that, Active monitor 1 breaks $\gamma_{cs}$ while Active monitor 2 breaks $\gamma_{\alpha\beta\gamma}$.}
\label{tab:3v}
\end{table*}
}

Finally, in Table~\ref{tab:3v}, we documented the outcomes produced by each monitor for each property, considering the given trace, thus validating the results discussed throughout the paper. Notably, we observed that the use of imperfect information monitors corrected the erroneous outcomes for properties $\varphi_3$ and $\psi_3$, which the standard monitor incorrectly classified as $\top$ and $\bot$, respectively. 

Moreover, Table~\ref{tab:3v} illustrates the effect of using a reactive monitor in the verification of $\psi = \psi_1 \lor \psi_2$. The imperfect information monitor could at best return $?_{\not\top}$, similar to the active monitor. The active monitor did not improve the outcome for $\psi$ because the bound $\textbf{b}$ was set to 3, forcing the monitor to choose between breaking one equivalence class (e.g., $\gamma_{cs}$) or the other (e.g., $\gamma_{\alpha\beta\gamma}$), but not both.
As a result, the reactive monitor achieved better outcomes than its counterparts by dynamically adapting to and reacting to new events. By doing so, and with a time window set to 2 (as demonstrated in Example~\ref{ex:reactive}), the reactive monitor was able to first break one equivalence class ($\gamma_{\alpha\beta\gamma}$) and then the other ($\gamma_{cs}$) in two separate steps during the trace of events analysed.

\subsubsection{Experiments on metrics.}

The final set of experiments focused on the impact of different metrics on the results obtained by the monitors. Since these experiments were solely concerned with the influence of metrics, we opted to use active monitors instead of reactive ones. This choice was made because the objective was to isolate the effect of metrics on the monitoring process, whereas the impact of reactive behaviour was already addressed in the previous set of experiments.

Specifically, we tested four different metrics, as detailed in Table~\ref{table:metrics_comparison} and Table~\ref{table:metrics_comparison_perc}. We generated 10,000 random LTL formulas and verified them against 100 randomly generated traces of events, resulting in 1,000,000 executions for each metric analysed. It is important to note that while the formulas and traces were randomly generated, they were kept consistent across the different experiments. This ensured that each metric was tested on the same set of LTL formulas and traces, allowing for a fair comparison.

\begin{table}[h!]
\centering
\begin{tabular}{|c|c|c|c|c|c|c|c|}
\hline
\textbf{Metric} & \textbf{Total} & \textbf{$\top$} & \textbf{$\bot$} & \textbf{$uu$} & \textbf{$\unknown$} & \textbf{$?_{\not\bot}$} & \textbf{$?_{\not\top}$} \\ \hline
\textbf{$metric_0$} & 1000000 & 175793 & 275205 & 29676 & 463563 & 29873 & 25890 \\ \hline
\textbf{$metric_1$} & 1000000 & 165239 & 280671 & 16983 & 481076 & 26978 & 29053 \\ \hline
\textbf{$metric_2$} & 1000000 & 173899 & 278783 & \textbf{16875} & 480616 & 29066 & 20761 \\ \hline
\textbf{$metric_3$} & 1000000 & 163404 & 284628 & 17498 & 478891 & 24701 & 30878 \\ \hline
\end{tabular}
\caption{Comparison of Metrics.}
\label{table:metrics_comparison}
\end{table}

Table~\ref{table:metrics_comparison} presents the results of these experiments, showing the number of times each outcome was returned by the monitor for each metric. Table~\ref{table:metrics_comparison_perc} provides the same results in percentage form, offering a clearer understanding of the metrics' influence.

\begin{table}[h!]
\centering
\begin{tabular}{|c|c|c|c|c|c|c|c|}
\hline
\textbf{Metric} & \textbf{Total} & \textbf{$\top$} & \textbf{$\bot$} & \textbf{$uu$} & \textbf{$\unknown$} & \textbf{$?_{\not\bot}$} & \textbf{$?_{\not\top}$} \\ \hline
\textbf{$metric_0$} & 1000000 & 17.58\% & 27.52\% & 2.97\% & 46.36\% & 2.99\% & 2.59\% \\ \hline
\textbf{$metric_1$} & 1000000 & 16.52\% & 28.07\% & 1.70\% & 48.11\% & 2.70\% & 2.91\% \\ \hline
\textbf{$metric_2$} & 1000000 & 17.39\% & 27.88\% & \textbf{1.69\%} & 48.06\% & 2.91\% & 2.08\% \\ \hline
\textbf{$metric_3$} & 1000000 & 16.34\% & 28.46\% & 1.75\% & 47.89\% & 2.47\% & 3.09\% \\ \hline
\end{tabular}
\caption{Comparison of Metrics in Percentages.}
\label{table:metrics_comparison_perc}
\end{table}

We labelled the metrics from 0 to 3, referring to each as $metric_i$ for brevity. Among these, $metric_2$ corresponds to the metric introduced in Example~\ref{ex:metric}, while the others are variations that assign different weights to temporal operators (see the Appendix for the detailed description). Specifically, $metric_0$ serves as a baseline metric, as it assigns payoffs to atomic propositions without giving particular importance to any of them. We included this baseline metric to demonstrate how selecting a more carefully considered metric can significantly improve the monitoring process. This improvement is evident in Table~\ref{table:metrics_comparison}, where $metric_0$ shows the poorest performance, with the highest number of undefined outcomes. In contrast, $metric_2$ performs the best, yielding the fewest undefined outcomes. Notice that, the undefined value is the worst outcome between the six truth values. In fact, it gives a definitive result without providing any information on the analysed formula. 


%

\begin{remark}
It is important to note that while $metric_2$ performs better than $metric_0$ in terms of reducing the number of undefined outcomes, the percentage of such outcomes remains minimal compared to the other monitor results (with $?$ being the most frequently returned outcome). This indicates that although selecting a well-suited metric can indeed reduce the proportion of undefined outcomes, its impact is more pronounced in scenarios where the metric is contextually relevant. 
In the experiments summarised in Table~\ref{table:metrics_comparison} and Table~\ref{table:metrics_comparison_perc}, the random generation of LTL formulas and traces demonstrates that the choice of a metric does not significantly affect the overall distribution of outcomes. However, it does play a crucial role in minimising the occurrence of the least desirable outcome, namely the undefined result.
\end{remark}




\section{Related Work}
\label{sec:related-work}


Several works address runtime verification in scenarios involving uncertainty; for a comprehensive overview, we refer to a recent survey on the topic~\cite{DBLP:journals/csr/TalebHK23}.

The work most closely related to ours is~\cite{DBLP:conf/rv/WangASL11}, where Past-Time LTL is verified at runtime under uncertainty regarding the observed events. In that study, verification is performed on abstract traces of events, where not all concrete events are present—only samples taken at specific time intervals. The uncertainty arises from unknown event interleaving, whereas in our approach, it stems from indistinguishability relations between events. Unlike~\cite{DBLP:conf/rv/WangASL11}, we do not sample events; instead, the uncertainty is determined by the monitor's visibility. Consequently, our abstraction focuses not on the order of events in a trace but on the types of events the trace contains.

Another closely related work to ours is presented in~\cite{DBLP:conf/sac/JoshiTF17}, where the authors propose an approach to RV in scenarios involving transient event loss—where events are temporarily lost, their quantity is known, but their content remains unknown. The objective of their work is to demonstrate that certain properties can still be monitored despite these losses, provided that subsequent valid events are observed. They denote lost events using the symbol $\chi$ and express the properties using LTL. These properties are then translated into an RV-LTL monitor, which is represented as a finite-state automaton. To create a loss-tolerant monitor, they introduce a new state that manages the lossy elements. The monitor outputs a verdict of $?$ (analogous to our $uu$ verdict) upon encountering a lost event but continues its operation based on the subsequent valid events. Unlike our approach, which also considers online monitoring and the management of imperfect information through active and reactive monitors, the approach in~\cite{DBLP:conf/sac/JoshiTF17} is confined to an offline setting and does not address the handling of imperfect information.

The works in~\cite{DBLP:conf/rv/BasinKMZ12,DBLP:conf/rv/BasinKMZ14,DBLP:conf/cav/BasinKZ17} and ours both address RV under challenging conditions, yet they differ significantly in their approaches to handling imperfect information and the types of monitors employed. The methods in~\cite{DBLP:conf/rv/BasinKMZ12,DBLP:conf/rv/BasinKMZ14,DBLP:conf/cav/BasinKZ17} emphasise runtime verification techniques that manage incomplete or conflicting logs, imprecise timestamps, and out-of-order data streams, ensuring effective monitoring despite these adversities. In contrast, our work advances standard RV by explicitly addressing imperfect information through the use of indistinguishability relations, and by developing a novel ``rational monitor'' capable of dynamically adjusting its behaviour based on the information it receives. This proactive approach contrasts with the more static methods used in~\cite{DBLP:conf/rv/BasinKMZ12,DBLP:conf/rv/BasinKMZ14,DBLP:conf/cav/BasinKZ17}. Furthermore, while the methods in~\cite{DBLP:conf/rv/BasinKMZ12,DBLP:conf/rv/BasinKMZ14,DBLP:conf/cav/BasinKZ17} are typically applied within traditional RV frameworks, often in compliance monitoring or real-time systems, our approach is specifically designed for autonomous systems. This emphasis on a monitor that remains effective even under severely limited or faulty information makes our approach particularly suitable for complex, real-world environments such as autonomous robotics.

The work in~\cite{DBLP:conf/fsttcs/AcetoAFI17} addresses the monitorability of branching-time logics that incorporate silent actions, focusing on how these actions influence the ability to verify certain properties at runtime. Their research is particularly concerned with the challenges posed by silent actions in RV and develops a framework to assess the monitorability of specifications in the presence of these actions. In contrast, our work expands on the concept of RV by addressing scenarios where the information available is imperfect, utilising indistinguishability relations to manage and verify systems under these conditions. Furthermore, we introduce the consideration of rational aspects in RV, particularly under conditions of imperfect information. While both studies contribute to the advancement of RV techniques, our approach offers a broader perspective by encompassing various types of imperfect information, whereas~\cite{DBLP:conf/fsttcs/AcetoAFI17} is more specialised, dealing with the specific implications of silent actions within a RV context.

In a different line of research, works such as~\cite{DBLP:conf/rv/StollerBSGHSZ11,DBLP:conf/rv/BartocciGKSSZS12,DBLP:conf/rv/KalajdzicBSSG13,DBLP:journals/corr/BartocciG13,DBLP:conf/rv/LeuckerSS0T19} address uncertainty in verification caused by the absence of information. In these studies, event traces may contain gaps, meaning that the monitor cannot observe system behaviour at certain points during execution. This issue has been addressed in various ways, typically by filling these gaps with potential events. Given the uncertainty about which events actually occurred during these gaps, these approaches often rely on probabilistic methods to infer the missing events. These works differ fundamentally from ours because we do not assume any missing information—there are no gaps in our traces. Instead, our uncertainty arises from the monitor's inability to distinguish between certain events due to indistinguishability relations.

A more recent work on RV with uncertainty is~\cite{DBLP:conf/icse/TalebKH21}, where uncertainty is abstracted using multi-traces instead of standard uni-traces. A multi-trace allows multiple evaluations for the same atomic proposition within the trace. The authors propose a monitor designed to handle such multi-traces and prove its soundness. Similar to~\cite{DBLP:conf/rv/StollerBSGHSZ11,DBLP:conf/rv/BartocciGKSSZS12,DBLP:conf/rv/KalajdzicBSSG13,DBLP:journals/corr/BartocciG13,DBLP:conf/rv/LeuckerSS0T19}, this work also focuses on missing events, although it considers partially missing events as well.

Unlike our approach, all the aforementioned works explicitly represent the notion of uncertainty (e.g., through gaps). When traces contain concrete events, these works typically adhere to standard semantics. Our approach, by contrast, is less intrusive, building on the existing RV pipeline for verifying LTL properties without introducing gaps. We focus on extending the standard RV technique for LTL to handle scenarios where the monitor has imperfect information about the system. From an engineering standpoint, our method enhances the standard LTL approach for use in cases of imperfect information, while the other works in the literature primarily propose entirely new techniques to manage the absence of information, often caused by noise or technical issues.

\section{Conclusions and Future Work}
\label{sec:conclusions-and-future-work}

In this paper, we introduced several extensions to the standard LTL runtime verification approach. We addressed the challenge of imperfect information at the monitor level and demonstrated how this lack of information can cause a standard LTL monitor to produce incorrect verdicts. To mitigate this issue, we proposed an extended version of the LTL monitoring process specifically designed to handle imperfect information. However, monitors with imperfect information yield fewer verdicts than those with perfect information. To address this limitation, we introduced the novel concept of rational monitors. In particular, we defined two classes of rational monitors: active and reactive, which are designed to improve the handling of imperfect information at the monitoring level. We provided a theoretical framework for understanding imperfect information through the use of equivalence classes, alongside the concepts of active and reactive monitors, and examined how imperfect information affects the verification of LTL properties. Additionally, we presented a Python prototype implementing our approach and demonstrated its application on a relevant case study, alongside additional experiments designed to stress-test the implementation.

For future work, we envision two main directions. First, we plan to further investigate the role of metrics in the imperfect information runtime verification process. Our goal is to identify a metric that offers the best possible trade-off for verifying LTL formulas that minimises the percentage of undefined outcomes while preserving other definitive truth outcomes. 
Second, we aim to extend the concept of rational monitors to a multi-agent setting. This extension could have significant implications for real-world applications, particularly in Internet of Things (IoT) systems, where individual components are often monitored independently. Due to the distributed nature of IoT systems, these components frequently lack complete knowledge of the entire system. In such scenarios, a variant of our approach could be highly effective.


%
%
%
\bibliographystyle{splncs04}
\bibliography{main}
\newpage
\section*{Appendix}

We report the $metric_0$ as used in Section~\ref{sec:implementation}
\begin{eqnarray*}
metric_q(p) &=&  0 \; with \; p \neq q\\
metric_p(p) &=&  1\\
metric_{\lnot p}(p) &=& 1 \\ 
metric_{\spec \lor \spec'}(p) &=& (metric_{\spec}(p) + metric_{\spec'}(p)) / 2 \\
metric_{\spec \land \spec'}(p) &=& min(metric_{\spec}(p), metric_{\spec'}(p))\\
metric_{\spec\Rightarrow\spec'}(p) &=& (metric_{\spec}(p) + metric_{\spec'}(p)) / 2 \\ 
metric_{\nextOp{\spec}}(p) &=& 0.1 \times metric_{\spec}(p) \\
metric_{\spec\until\spec'}(p) &=& 0.9 \times metric_{\spec}(p) + 0.1 \times metric_{\spec'}(p)\\
metric_{\spec\release\spec'}(p) &=& 0.9 \times metric_{\spec}(p) + 0.1 \times metric_{\spec'}(p)
\end{eqnarray*}

We report the $metric_1$ as used in Section~\ref{sec:implementation}
\begin{eqnarray*}
metric_q(p) &=&  0 \; with \; p \neq q\\
metric_p(p) &=&  1\\
metric_{\lnot p}(p) &=& 1 \\ 
metric_{\spec \lor \spec'}(p) &=& (metric_{\spec}(p) + metric_{\spec'}(p)) / 2 \\
metric_{\spec \land \spec'}(p) &=& max(metric_{\spec}(p), metric_{\spec'}(p))\\
metric_{\spec\Rightarrow\spec'}(p) &=& (metric_{\spec}(p) + metric_{\spec'}(p)) / 2 \\ 
metric_{\nextOp{\spec}}(p) &=& 0.5 \times metric_{\spec}(p) \\
metric_{\spec\until\spec'}(p) &=& 0.3 \times metric_{\spec}(p) + 0.7 \times metric_{\spec'}(p)\\
metric_{\spec\release\spec'}(p) &=& 0.5 \times metric_{\spec}(p) + 0.5 \times metric_{\spec'}(p)
\end{eqnarray*}

We report the $metric_3$ as used in Section~\ref{sec:implementation}
\begin{eqnarray*}
metric_q(p) &=&  0 \; with \; p \neq q\\
metric_p(p) &=&  1\\
metric_{\lnot p}(p) &=& 1 \\ 
metric_{\spec \lor \spec'}(p) &=& (metric_{\spec}(p) + metric_{\spec'}(p)) / 2 \\
metric_{\spec \land \spec'}(p) &=& max(metric_{\spec}(p), metric_{\spec'}(p))\\
metric_{\spec\Rightarrow\spec'}(p) &=& (metric_{\spec}(p) + metric_{\spec'}(p)) / 2 \\ 
metric_{\nextOp{\spec}}(p) &=& 1.0 \times metric_{\spec}(p) \\
metric_{\spec\until\spec'}(p) &=& 0.3 \times metric_{\spec}(p) + 0.7 \times metric_{\spec'}(p)\\
metric_{\spec\release\spec'}(p) &=& 0.3 \times metric_{\spec}(p) + 0.7 \times metric_{\spec'}(p)
\end{eqnarray*}

\end{document}